\newtheorem{definition}{Definition}
\newtheorem{lemma}[definition]{Lemma}
\newtheorem{theorem}[definition]{Theorem}
\newtheorem{corollary}[definition]{Corollary}
\theoremstyle{definition}
\title{A Fully Abstract Symbolic Semantics for Psi-Calculi}
\author{Magnus Johansson \and Bj{\"o}rn Victor \and Joachim Parrow}
\date{\today}
\begin{document}
\maketitle

\begin{abstract}
We present a symbolic transition system and bisimulation equivalence
for psi-calculi, and show that it is fully
abstract with respect to bisimulation congruence in the non-symbolic semantics.

A psi-calculus is an extension of the pi-calculus with nominal data
types for data structures and for logical assertions representing
facts about data. These can be transmitted between processes and their
names can be statically scoped using the standard pi-calculus
mechanism to allow for scope migrations.
Psi-calculi can be more general than other proposed extensions of the
pi-calculus such as the applied pi-calculus, the spi-calculus, the
fusion calculus, or the concurrent constraint pi-calculus.

Symbolic semantics are necessary for an efficient implementation of
the calculus in automated tools exploring state spaces, and the full abstraction property means the
semantics of a process does not change from the original.
\end{abstract}

\section{Introduction}\label{sec:introduction}

A multitude of extensions of the pi-calculus have been defined,
allowing higher-level data structures and operations on them to be
used as primitives when modelling applications.  Ranging from
integers, lists, or booleans to encryption/decryption or hash
functions, the extensions increase the applicability of the basic calculus.
In order to implement automated tools for analysis and verification 
using state space exploration (e.g. bisimilarity or model checking),
each extended calculus needs a symbolic semantics, where the state space of
agents is reduced to a manageable size -- the non-symbolic semantics
typically generates infinite state spaces even for very simple agents. 

The extensions thus require added efforts both in developing the
theory of the calculus for each variant, and in constructing
specialised symbolic semantics for them.
As the complexity of the extensions increases, producing correct
results in these areas can be very hard. 
For example the labelled semantics of applied
pi-calculus~\cite{abadi.fournet:mobile-values} and of
CC-Pi~\cite{buscemi.montanari:open-bisimulation} have both turned out
to be non-compositional; another example is the rather complex  
bisimulations which have been developed for the
spi-calculus~\cite{abadi.gordon:calculus-cryptographic} (see
\cite{borgstrom.nestmann:bisimulations-spi} for an overview of
non-symbolic bisimulations, or
\cite{borgstroem.briais.ea:symbolic-bisimulation,briais:theory-tool,borgstroem:equivalences-calculi}
for symbolic ones).

The psi-calculi~\cite{bengtson.johansson.ea:psi-calculi} improve the situation: a single framework allows a range of
specialised calculi to be formulated with a lean and compositional labelled
semantics: with the parameters appropriately instantiated, the
resulting calculus can be used to model applications such as
cryptographic protocols 
and concurrent
constraints, but also more advanced scenarios with polyadic
synchronization or 
higher-order data and logics.
The expressiveness and modelling convenience of psi-calculi exceeds that
of earlier pi-calculus extensions, while the purity of the semantics
is on par with the original pi-calculus. Its meta-theory has been
proved mechanically using the theorem prover Isabelle~\cite{bengtson.parrow:psi-calculi-isabelle}.

In this paper we develop a symbolic semantics for psi-calculi,
admitting large parts of this range of calculi to be verified more efficiently.
We define a symbolic version of labelled bisimulation equivalence, and
show that it is fully abstract with respect to bisimulation congruence
in the original semantics.  This means that our new symbolic semantics
does not change which processes are considered equivalent.


A symbolic semantics abstracts the values received in an input
action. Instead of a possibly infinite branching of concrete values, a
single name is used to represent them all.  When the received values are
used in conditional constructions (e.g. if-then-else) or as
communication channels, we do not know their precise value, but need to
record the constraints which must be satisfied for a resulting
transition to be valid. 

A (non-symbolic) psi-calculus transition has the form $\Psi\frames P\gta P'$, with
the intuition that $P$ can perform $\alpha$ leading to $P'$ in an
environment that asserts $\Psi$.  
For example, if $P$ can
do an $\alpha$ to $P'$ then $\caseonly{\mathrm{prime}(x)}:{P}$ can make an
$\alpha$-transition
to $P'$ if we can deduce $\mathrm{prime}(x)$ from the environment, 
e.g. 
\[ \{x=3\}\frames \trans{\caseonly{\mathrm{prime}(x)}:{P}}{\alpha}{P'}. \]
In the symbolic semantics where we may not have the precise value of
$x$, we instead decorate the transition with its
requirement, so 
\[
\Psi\frames\transs{\caseonly{\mathrm{prime}(x)}:
P}{\alpha}{C \wedge \constr{\Psi\vdash\mathrm{prime}(x)}}{P'} \text{
(for any $\Psi$)}
\]
where $C$ is the requirement for $P$ to do an $\alpha$ to $P'$ in the
environment $\Psi$.
Constraints also arise from communication between parallel agents,
where, in the symbolic case, the precise channels may not be known;
instead we allow communication over symbolic representations of channels and
record the requirement in a transition constraint. As an example
consider 
$a(x)\sdot a(y)\sdot (\opm{x}{x}\sdot P\parop y(z)\sdot
Q)$ which after its initial inputs only has symbolic values of $x$ and
$y$. The resulting agent has the symbolic transition
$\opm{x}{x}\sdot P\parop y(z)\sdot Q\sgto{\tau}{\constr{\Psi\vdash x\sch y}}P\parop Q\lsubst{x}{z}$ where
$x\sch y$ means that $x$ and $y$ represent the same channel, but might
not have a $\tau$ transition in the non-symbolic semantics.

Communication channels in psi-calculi may be structured data terms,
not only names. This leads to a new source of possibly infinite
branching: a subject in a prefix may be rewritten to another
equivalent term before it is used in a transition. E.g., when $\mathsf{first}(x,y)$
and $x$ represent the same channel,
$P=\op{\mathsf{first}(a,b)}c\sdot P'\gt{\op{a}c}P'$, but also $P\gt{\op{\mathsf{first}(a,c)}c}P'$,
etc. 
The possibility of using
structured channels gives significant expressive power
(see~\cite{bengtson.johansson.ea:psi-calculi}). Our symbolic semantics
abstracts the equivalent forms of channel subject by using a fresh
name as subject, and adds a suitable constraint to the transition
label (see Section~\ref{sec:symbolic}).

\subsection{Comparison to related work}
Symbolic bisimulations for process calculi have a long history.
Our work is to a large extent based on the pioneering work by Hennessy
and Lin~\cite{hennessy.lin:symbolic-bisimulations} for value-passing
CCS, later specialised for the \pic{} by Boreale and
De~Nicola~\cite{boreale.de-nicola:symbolic-semantics} and independently
by Lin~\cite{lin:symbolic-transition,lin:computing-bisimulations}.
While~\cite{hennessy.lin:symbolic-bisimulations} is
parameterised by general boolean expressions on an underlying data
signature it does not handle names and mobility; on the other hand
\cite{boreale.de-nicola:symbolic-semantics,lin:symbolic-transition,lin:computing-bisimulations}
handle \emph{only} names and no other data structures.
The number of (direct or indirect) follow-up works to these is huge,
with applications ranging from pi-calculus to constraint programming;
here we focus on the relation to the ones for \api{} and spi-calculus.

The existing tools for calculi based on the \api{}
(e.g.~\cite{abadi.blanchet:analyzing-security,blanchet:efficient-cryptographic,blanchet.abadi.ea:automated-verification}),
are not fully abstract wrt bisimulation.  A symbolic
semantics and bisimulation for \api{} has been defined in
\cite{DBLP:conf/fsttcs/DelauneKR07}, but it is not complete.
Additionally, the labelled (non-symbolic) bisimulation of \api{} is
not compositional (see~\cite{bengtson.johansson.ea:psi-calculi}).
The situation for
the spi-calculus 
is better: fully abstract symbolic bisimulation for hedged
bisimulation has been defined in
\cite{borgstroem:equivalences-calculi}, and for open hedged
bisimulation (a finer equivalence) in \cite{briais:theory-tool}.  
According to those authors, neither is directly mechanizable.
The only symbolic
bisimulation which to our knowledge has been implemented in a tool is
not fully abstract~\cite{borgstroem.briais.ea:symbolic-bisimulation}.

It can be argued~\cite{borgstroem.briais.ea:symbolic-bisimulation}
that incompleteness is not a problem when verifying authentication and
secrecy properties of security protocols, which appears to have been the main
application of the \api{} so far.  
When going beyond security analysis we claim (based on experience from the Mobility
Workbench~\cite{victor.moller:mobility-workbench}) that
completeness is very important: when analysing
agents with huge state spaces, a positive result (the agents are
equivalent) may be more difficult to achieve than a negative result
(the agents differ).  However, such a negative result can only be
trusted if the analysis is fully abstract.

Our symbolic semantics is relatively simple, compared to the ones presented for \api{} or spi-calculus. 
In relation to the former, we are helped significantly by the absence
of structural equivalence rules, which in \api{} are rather
complex. In~\cite{DBLP:conf/fsttcs/DelauneKR07} an intermediate
semantics is used to overcome the complexity. In contrast we can directly
relate the original and symbolic semantics.
In relation to the symbolic semantics for spi-calculus, our semantics
has a straight-forward treatment of scope opening due to the simpler
psi-calculi semantics. In addition, the complexities of spi-calculus
bisimulations  are necessarily inherited by the symbolic semantics,
introducing e.g. explicit environment knowledge representations with
timestamps on messages and variables.  In psi-calculi, bisimulation is
much simpler and the symbolic counterpart is not significantly more
complex than the one for value-passing CCS.

In the light of these complications, the relevance of
precise encodings of the \api{} or spi-calculus as psi-calculi,
or comparing the resulting bisimulation equivalences is questionable.
Our interest is
in handling and analysing the same type of applications, and also
the more advanced kinds of applications mentioned in the beginning of this section. 

\paragraph{Disposition.}
In the next section we review the basic definitions of syntax,
semantics, and bisimulation of psi-calculi. Section~\ref{sec:symbolic}
presents the symbolic semantics and bisimulation, while
Section~\ref{sec:examples} illustrates the concrete and symbolic
transitions and bisimulations by examples. In
Section~\ref{sec:results} we show our main results: the correspondence
between concrete and symbolic transitions and
bisimulations. Section~\ref{sec:conclusion} concludes, and presents
plans and ideas for future work.

\section{Psi-calculi}\label{sec:psi}

This section is a brief recapitulation of psi-calculi and nominal data types; for a more extensive treatment including motivations and examples see~\cite{bengtson.johansson.ea:psi-calculi}.

\subsection{Nominal data types}

We assume a countably infinite set of atomic {\em names} $\nameset$ ranged over by $a,b,\ldots,x,y,z$. Intuitively, names will represent the symbols that can be statically scoped, and also represent symbols acting as variables in the sense that they can be subjected to substitution. 
A {\em nominal set}~\cite{PittsAM:nomlfo-jv,Gabbay01anew} is a set equipped with {\em name swapping} functions written $(a\;b)$, for any names $a,b$. An intuition is that for any member $X$ it holds that $(a\;b)\cdot X$ is $X$ with  $a$ replaced by $b$ and  $b$ replaced by $a$. One main point of this is that  even though we have not defined any particular syntax we can define what it means for a name to ``occur" in an element: it is simply that it can be affected by swappings. 
The names occurring in this way  in an element $X$ constitute the  {\em support} of $X$, written $\n(X)$.

  We write $a \freshin X$, pronounced ``$a$ is fresh for $X$", for $a \not\in \n(X)$.  If $A$ is a set of names we write $A\freshin X$ to mean $\forall a \in A \;.\; a \freshin X$.
We require all elements to have finite support, i.e., $\n(X)$ is finite for all
$X$.

A function $f$ on nominal sets is {\em equivariant} if $(a\;b)\cdot f(X)
= f((a\;b)\cdot X)$ holds for all $X,a,b$, and similarly for functions and
relations of any
arity. Intuitively, this means that all
names are treated equally.

A {\em nominal data type} is just a nominal set together with a set of functions on it. In particular we require a substitution function~\cite{bengtson.parrow:psi-calculi-isabelle}, which intuitively substitutes elements  for  names. If $X$ is an element of a data type, $\tilde{a}$ is a sequence of names without duplicates and $\tilde{Y}$ is an equally long sequence of elements, the {\em substitution}
$X\lsubst{\tilde{Y}}{\tilde{a}}$ is an element of the same data type as $X$.

\subsection{Agents}

\newcommand{\terms}{{\rm\bf T}}
\newcommand{\conditions}{{\rm\bf C}}
\newcommand{\assertions}{{\rm\bf A}}

A psi-calculus is defined by instantiating three nominal data types and four operators:
\begin{definition}[Psi-calculus parameters]
\label{def:parameters}
A psi-calculus requires the three (not necessarily disjoint) nominal data types:
\[\begin{array}{ll}
\terms & \mbox{the (data) terms, ranged over by $M,N$} \\
\conditions  & \mbox{the conditions, ranged over by $\varphi$}\\
\assertions & \mbox{the assertions, ranged over by $\Psi$}
\end{array}\]
and the four equivariant operators:
\[\begin{array}{ll}
\sch:  \terms \times \terms \to \conditions & \mbox{Channel Equivalence} \\
\ftimes: \assertions \times \assertions \to \assertions& \mbox{Composition} \\
\emptyframe: \assertions& \mbox{Unit} \\
\vdash\,\subseteq \assertions \times \conditions & \mbox{Entailment}
\end{array}
\]
\end{definition}
The binary functions above will be written in infix. Thus, if $M$ and $N$ are terms then $M \sch N$ is a condition, pronounced ``$M$ and $N$ are channel equivalent" and if $\Psi$ and $\Psi'$ are assertions then so is $\Psi \ftimes \Psi'$. Also we write $\Psi \vdash \varphi$, ``$\Psi$ entails $\varphi$", for $(\Psi, \varphi) \in \;\vdash$.

The data terms are used to represent all kinds of data, including communication
channels. Conditions are used as guards in agents, and $M \sch N$ is a
particular condition saying that $M$ and $N$ represent the
same channel.
The assertions will be used to declare information necessary to resolve the
conditions. Assertions can be contained in agents and thus represent information
postulated by that agent; they can contain names and thereby be syntactically
scoped and thus represent information known only to the agents within that
scope. 
The intuition of entailment is that $\Psi \vdash \varphi$ means that given the
information in $\Psi$, it is possible to infer $\varphi$.
 We say that two assertions are equivalent if they entail the same conditions:
\begin{definition}[Assertion equivalence]
\label{def:assEq}
Two assertions are {\em equivalent}, written $\Psi \sequivalent \Psi'$, if for all $\varphi$ we have that $\Psi \vdash \varphi \Leftrightarrow\Psi' \vdash \varphi$.
\end{definition}

A psi-calculus is formed by instantiating the nominal data types and morphisms so that the following requisites are satisfied:
\begin{definition}[Requisites on valid psi-calculus parameters]
\label{def:entailmentrelation}
\
\begin{mathpar}
\begin{array}{ll}
\mbox{Channel Symmetry:} & \Psi \vdash M \sch N \; \Longrightarrow\; \Psi \vdash N \sch M \\
\mbox{Channel Transitivity:} & \Psi \vdash M \sch N \; \land \; \Psi \vdash N \sch L\\
& \quad \quad \;\Longrightarrow\; \Psi \vdash M \sch L\\ 
\\
\mbox{Weakening:} & \Psi \vdash \varphi \; \Longrightarrow \; \Psi \ftimes \Psi' \vdash \varphi \\
\\ 

\mbox{Composition:} & \Psi \sequivalent \Psi'  \;\Longrightarrow\; \Psi \ftimes \Psi'' \sequivalent \Psi' \ftimes \Psi''\\
\mbox{Identity:} & \Psi \ftimes \emptyframe \sequivalent \Psi \\
\mbox{Associativity:}& (\Psi \ftimes \Psi') \ftimes \Psi'' \sequivalent \Psi \ftimes (\Psi' \ftimes \Psi'')\\
\mbox{Commutativity:}&  \Psi \ftimes \Psi' \sequivalent \Psi' \ftimes \Psi \\

\end{array}
\end{mathpar}
\end{definition}
\noindent

Our requisites on a psi-calculus are that the channel equivalence is a partial
equivalence relation, that $\ftimes$ preserves equivalence, and that the
equivalence classes of assertions form an abelian monoid. 
We do not require that channel equivalence is reflexive. There may be
terms $M$ such that $M\sch M$ does not hold. By transitivity and
symmetry then $M\sch N$ holds for no $N$, which means that $M$ cannot
be used as a channel at all. In this way we accommodate data
structures which cannot be used as channels.
The requisite of weakening (which is not present in
\cite{bengtson.johansson.ea:psi-calculi}) excludes some non-monotonic logics; it
simplifies our proofs in the present paper although we do not know if it is
absolutely necessary. It is only used in one place in the
proof of Theorem~\ref{theorem:soundness}.

In the following $\tilde{a}$ means a finite (possibly empty) sequence of names, $a_1,\ldots,a_n$. The empty sequence is written $\epsilon$ and the concatenation of $\tilde{a}$ and $\tilde{b}$ is written $\tilde{a} \tilde{b}$.
When occurring as an operand of a set operator, $\tilde{a}$ means the corresponding set of names $\{a_1,\ldots, a_n\}$. We also use sequences of terms, conditions, assertions etc. in the same way.

A {\em frame} can intuitively be thought of as an assertion with local names:
\begin{definition}[Frame]
\label{def:frame}
A {\em frame} $F$ is of the form $\framepair{\frnames{}}{\frass{}}$ where
$\frnames{}$ is a sequence of names considered bound in
the assertion 
$\frass{}$. We use $F,G$ to range over frames. \footnote{In some presentations
frames have been written just as pairs $\langle \frnames{},\frass{} \rangle$.
The notation in this paper better conveys the idea that the names bind into the
assertion, at the slight risk of confusing frames with agents. Formally, we
establish frames and agents as separate types, although a valid intuition is to
regard a frame as a special kind of agent, containg only scoping and assertions.
This is the view taken in~\cite{abadi.fournet:mobile-values}. }

\end{definition}
Name swapping on a frame $F =
\framepair{\frnames{}}{\frass{}}$ just distributes to its two components. We
identify
alpha equivalent frames, so $\n(F) = \n(\frass{}) - \n(\frnames{})$. 
We overload $\emptyframe$ to also mean the least informative frame
$\framepair{\epsilon}{\emptyframe}$ and $\ftimes$ to mean composition on frames
defined by $\framepair{\frnames{1}}{\frass{1}} \ftimes
\framepair{\frnames{2}}{\frass{2}} = 
\framepair{\frnames{1} \frnames{2}}{\frass{1} \ftimes \frass{2}}$ where
$\frnames{1}$ $\freshin$ $\frnames{2},\frass{2}$ and vice versa. We write
$(\nu c)(\framepair{\frnames{}}{\frass{}})$
for $\framepair{c\frnames{}}{\frass{}}$, and when there is
no risk of confusing a frame with an assertion we write $\frass{}$ for
$\framepair{\epsilon}{\frass{}}$.

\begin{definition}[Equivalence of frames]\label{def:frame-equivalence}
We define $F \vdash \varphi$ to mean that there exists an alpha variant
$\framepair{\frnames{}}{\frass{}}$ of $F$ such that  $\frnames{}
\freshin \varphi$ and $\frass{} \vdash \varphi$. We also define 
$F\sequivalent G$ to mean that for all $\varphi$ it holds that $ F \vdash
\varphi$ iff $ G \vdash \varphi$.
\end{definition}
Intuitively a condition is entailed by a frame if it is entailed by the
assertion and does not contain any names bound by the frame. 
Two frames are equivalent if they entail the same conditions.

\begin{definition}[Psi-calculus agents]\label{def:agents}
Given valid psi-calculus parameters as in Definitions~\ref{def:parameters} and~\ref{def:entailmentrelation}, the psi-calculus {\em agents}, ranged over by $P,Q,\ldots$,  are of the following forms.
{\rm
\[
\begin{array}{ll}

\out{M}N .P                   & \mbox{Output} \\
\inprefix{M}{\ve{x}}{x}.P          & \mbox{Input}\\
\caseonly{\ci{\varphi_1}{P_1}\casesep\cdots\casesep\ci{\varphi_n}{P_n}}
&\mbox{Case} \\
(\nu a)P                      & \mbox{Restriction}\\
P \pll Q                      & \mbox{Parallel}\\
! P                           & \mbox{Replication} \\
\pass{\Psi}                        & \mbox{Assertion}\\
\end{array}\]
}

\noindent
In the Input $\inprefix{M}{\ve{x}}{x}.P$,  $x$ binds its occurrences in $P$. 
Restriction binds $a$ in $P$. An assertion is {\em guarded} if it is a subterm
of an Input or Output. In a replication $!P$ there may be no unguarded
assertions in $P$.
\end{definition}
In the Output and Input forms $M$ is called the subject and
$N$ and $x$ the objects, respectively.
Output and Input  are similar to those in the pi-calculus, but
arbitrary terms can function as both subjects and objects.
Note that differently from~\cite{bengtson.johansson.ea:psi-calculi},
for simplicity the input is not pattern matching (see
Section~\ref{sec:conclusion} for a discussion).
The {\bf case} construct works by performing the action of any
$P_i$ for which the corresponding $\varphi_i$ is true. So it
embodies both an {\bf if} (if there is only one branch) and an internal
nondeterministic choice (if the conditions are overlapping).

Some notational conventions: We define the agent $\nil$ as $\pass{\emptyframe}$.
The  construct 
$\caseonly{\ci{\varphi_1}{P_1}\casesep\cdots\casesep\ci{\varphi_n}{P_n}}$ is
sometimes written as
\mbox{\rm $\caseonly{\ci{\ve{\varphi}}{\ve{P}}}$}, or if $n=1$ as 
$\ifthen{\varphi_1}{P_1}$.
The input subject is underlined to facilitate parsing of complicated
expressions; in simple cases we often conform to a more traditional notation and
omit the underline.

Formally, we define name swapping on agents by distributing it over all constructors, and substitution on agents by distributing it and avoiding captures by binders through alpha-conversion in the usual way. We identify alpha-equivalent agents; in that way we get a nominal data type of agents where
the support $\n(P)$ of $P$ is the union of
the supports of the components of $P$, removing the names bound by Input and
$\nu$, and corresponds to the names with a free occurrence in $P$.

\begin{definition}[Frame of an agent]
The {\em frame $\fr{P}$ of an agent} P is defined inductively as follows:
\[\begin{array}{l}
\fr{\inprefix{M}{\ve{x}}{x}.P} = \fr{\out{M}N.P} = \\
\qquad \qquad \fr{\caseonly{\ci{\ve{\varphi}}{\ve{P}}}} =
\fr{!P} = \emptyframe \\
\fr{\pass{\Psi}} = \framepair{\epsilon}{\Psi} \\
\fr{P \pll Q} = \fr{P}\ \ftimes\ \fr{Q}\\
\fr{\res{b}P} = (\nu b)\fr{P}  \\
 \end{array}\]
\end{definition}

\subsection{Operational semantics}

The presentation of psi-calculi in~\cite{bengtson.johansson.ea:psi-calculi} gives a semantics of an early kind, where input actions are of kind $\inn{M}N$. 
Here we give an operational semantics of the late kind, meaning that
the labels of input transitions contain variables for the object to be
received. With this kind of semantics it is easier to establish a
relation to the symbolic semantics. We also establish precisely how it
relates to the original.

\begin{definition}[Actions]

The {\em actions} ranged over by $\alpha, \beta$ are of the following three kinds:
$\out{M}{(\nu \tilde{a})N}$ (Output), $\inlabel{M}{\ve{x}}{x}$ (Input), and
$\tau$ (Silent).



\end{definition}

For actions we refer to $M$ as the {\em subject} and $N$ and $x$ as the {\em
objects}. We let $\subj{\bout{\ve{a}}{M}{N}} =
\subj{\inlabel{M}{}{x}} = M$. We define 
$\bn{\out{M}{(\nu \tilde{a})N}} = \tilde{a}$, 
$\bn{\inlabel{M}{\ve{x}}{x}} = \{x\}$, 
and $\bn{\tau}=\emptyset$. We also define $\n(\tau)=\emptyset$ and
$\n(\alpha) = \n(N) \cup \n(M)$ if $\alpha$ is an output or input.
As in the pi-calculus, the output $\out{M}{(\nu \tilde{a})N}$ represents an
action sending $N$ along $M$ and opening the scopes of the names $\tilde{a}$. 
Note in particular that the support of this action includes  $\tilde{a}$. Thus 
$\out{M}{(\nu a)a}$ and $\out{M}{(\nu b)b}$ are different actions.

\begin{table*}[tb]

\begin{minipage}{1\textwidth} 
\begin{mathpar}

\inferrule*[Left=\textsc{In}]
    {\Psi \vdash M \sch K}
    {\framedtransempty
      {\Psi}
      {\inprefix{M}{\ve{x}}{x}.P}
      {\inlabel{K}{\ve{x}}{x}}
      {P}
    }~~
\inferrule*[left=\textsc{Out}]
    {\Psi \vdash M \sch K }
    {\framedtransempty
      {\Psi}
      {\out{M}{N}.P}
      {\out{K}{N}}
      {P}
    }~~
\inferrule*[left={\textsc{Case}}]
    {\framedtransempty{\Psi}{P_i}{\alpha}{P'} \\ \Psi \vdash \varphi_i}
    {\framedtransempty{\Psi}{\caseonly{\ci{\ve{\varphi}}{\ve{P}}}}{\alpha}{P'}}

\inferrule*[Left=\textsc{Com}, Right={$\inferrule{}{\ve{a} \freshin Q
}$}]
 {\framedtransempty{\frass{Q} \ftimes \Psi}{P}{\bout{\ve{a}}{M}{N}}{P'} \\
  \framedtransempty{\frass{P} \ftimes \Psi}{Q}{\inlabel{K}{\ve{x}}{x}}{Q'} \\
  \Psi \ftimes \frass{P} \ftimes \frass{Q} \vdash M \sch K
  }{\framedtransempty{\Psi}{P \pll Q}{\tau}{(\nu \ve{a})(P' \pll
Q'\lsubst{N}{x})}}

\inferrule*[left=\textsc{Par},  right={$\bn{\alpha} \freshin Q$
}]
{\framedtransempty{\frass{Q} \ftimes \Psi}{P} {\alpha}{P'}}
{\framedtransempty{\Psi}{P|Q}{\alpha}{P'|Q}}

\inferrule*[left=\textsc{Scope}, right={$b \freshin \alpha,\Psi$}]
    {\framedtransempty{\Psi}{P}{\alpha}{P'}}
    {\framedtransempty{\Psi}{(\nu b)P}{\alpha}{(\nu b)P'}}

\inferrule*[left=\textsc{Open}, right={$\inferrule{}{b \freshin
\ve{a},\Psi,M\\\\
b \in \n(N)}$}]
    {\framedtransempty{\Psi}{P}{\bout{\ve{a}}{M}{N}}{P'}}
    {\framedtransempty{\Psi}{(\nu b)P}{\bout{\ve{a} \cup \{b\}}{M}{N}}{P'}}

\inferrule*[left=\textsc{Rep}]
   {\framedtransempty{\Psi}{P \pll !P}{\alpha}{P'}}
   {\framedtransempty{\Psi}{!P}{\alpha}{P'}}

\end{mathpar}

\caption{Late operational semantics. Symmetric versions of \textsc{Com}
and \textsc{Par} are elided. In the rule $\textsc{Com}$ we assume that $\fr{P} =
\framepair{\frnames{P}}{\frass{P}}$ and   $\fr{Q} =
\framepair{\frnames{Q}}{\frass{Q}}$ where $\frnames{P}$ is fresh for all of 
$\Psi, \frnames{Q}, Q, M$ and $P$, and that $\frnames{Q}$ is correspondingly
fresh. In the rule
\textsc{Par} we assume that $\fr{Q} = \framepair{\frnames{Q}}{\frass{Q}}$
where $\frnames{Q}$ is fresh for
$\Psi, P$ and $\alpha$. 
In $\textsc{Open}$ the expression $\tilde{a} \cup \{b\}$ means the sequence
$\tilde{a}$ with $b$ inserted anywhere.
}

\label{table:struct-free-labeled-operational-semantics}
\end{minipage}
\end{table*}

\begin{definition}[Transitions]
\label{transitions}

A {\em transition} is of the kind \mbox{$\framedtransempty{\Psi}{P}{\alpha}{P'}$}, meaning that when the environment contains the assertion $\Psi$ the agent $P$
can do an $\alpha$ to become $P'$.  The transitions are defined inductively in 
Table~\ref{table:struct-free-labeled-operational-semantics}.
\end{definition}
Note that $\Psi$ in Table~\ref{table:struct-free-labeled-operational-semantics}
expresses the effect that  the environment has on the agent, by enabling
conditions in \textsc{Case}, by giving rise to action subjects in \textsc{In}
and \textsc{Out} and by enabling interactions in \textsc{Com}.

Both agents and frames are identified by alpha equivalence. This means that we
can choose the bound names fresh in the premise of a rule. In a transition the
names in $\bn{\alpha}$ count as binding into both the action object and the
derivative, and transitions are identified up to alpha equivalence.
This means that the bound names can be chosen fresh, substituting each          
occurrence in both the object and the derivative. This is the reason why
$\bn{\alpha}$ is in the support of the output action: otherwise it could be
alpha-converted in the action alone. 

\begin{table*}[tb]

\begin{minipage}{1\textwidth} 
\begin{mathpar}

\inferrule*[left=\textsc{In}]
    {\Psi \vdash M \sch K }
{\framedtransempty{\Psi}{\inprefix{M}{\ve{y}}{x}.P}{\inn{K}{N}} {
P\lsubst { N } { x } } } \quad
\inferrule*[left=\textsc{Com}, right={$\inferrule{}{\ve{a} \freshin Q }$}]
 {\Psi \ftimes \frass{P} \ftimes \frass{Q} \vdash M \sch K \\\\
  \framedtransempty{\frass{Q} \ftimes \Psi}{P}{\bout{\ve{a}}{M}{N}}{P'} \\
  \framedtransempty{\frass{P} \ftimes \Psi}{Q}{\inn{K}{N}}{Q'}
  }
       {\framedtransempty{\Psi}{P \pll Q}{\tau}{(\nu \ve{a})(P' \pll Q')}}

\end{mathpar}
\caption{Early structured operational semantics. All other rules are as in the late semantics of Fig.~\ref{table:struct-free-labeled-operational-semantics}.}
\label{table:original-semantics}
\end{minipage}
\end{table*}

Table~\ref{table:original-semantics} gives the rules for
input and communication
of an early kind used in~\cite{bengtson.johansson.ea:psi-calculi}.
The following lemma clarifies the relation between the two semantics:
\begin{lemma}
\ 
\label{lemma:late-early}
\begin{enumerate}
\item \label{lemma:late-early:in}
$\framedtransempty{\Psi}{P}{\inn{M}N}{Q}$ in the early semantics iff there exist
$Q'$ and $x$ such that
$\framedtransempty{\Psi}{P}{\inlabel{M}{\ve{x}}{x}}{Q'}$ in the late semantics,
where $Q=Q'\lsubst{N}{x}$.
\item \label{lemma:late-early:out-tau}
For output and $\tau$ actions, $\framedtransempty{\Psi}{P}{\alpha}{Q}$ in the early semantics iff the same transition can be derived in the early semantics.
\end{enumerate}
\end{lemma}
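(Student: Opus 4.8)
The plan is to argue by induction on the depth of derivations, treating the two statements in turn and each direction of the ``iff'' separately. The key structural observation is that the late and early rule sets of Tables~\ref{table:struct-free-labeled-operational-semantics} and~\ref{table:original-semantics} differ only in \textsc{In} and \textsc{Com}, and that \textsc{Com} and \textsc{Open} can only conclude $\tau$- and output-transitions; hence a derivation of an input transition uses only \textsc{In}, \textsc{Case}, \textsc{Par}, \textsc{Scope} and \textsc{Rep}, and part~\ref{lemma:late-early:in} can be established on its own. For the direction from late to early it is cleanest to prove the stronger statement that if $\framedtransempty{\Psi}{P}{\inlabel{M}{\ve{x}}{x}}{Q'}$ holds in the late semantics then $\framedtransempty{\Psi}{P}{\inn{M}N}{Q'\lsubst{N}{x}}$ holds in the early semantics for \emph{every} term $N$; the direction from early to late dually reconstructs $x$ and $Q'$ from an early derivation. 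The base case is \textsc{In}: the two versions share the premise $\Psi \vdash M \sch K$ and differ only in that the late derivative is $P$ (with $x$ free) while the early one is $P\lsubst{N}{x}$, so taking $Q' := P$ settles it. The inductive cases \textsc{Case}, \textsc{Par}, \textsc{Scope} and \textsc{Rep} are the same rule in both systems, so each step consists of applying that rule to the transition provided by the induction hypothesis and matching the derivatives; this uses only the routine substitution identities $(P\pll Q)\lsubst{N}{x} = P\lsubst{N}{x}\pll Q$ when $x\freshin Q$, $((\nu b)P)\lsubst{N}{x} = (\nu b)(P\lsubst{N}{x})$ when $b\freshin N,x$, and the commutation of substitution with \textsc{Case} and \textsc{Rep}.

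For part~\ref{lemma:late-early:out-tau} I would again induct on derivations. Every rule other than \textsc{In} and \textsc{Com} is literally identical in the two systems and \textsc{In} never produces an output or $\tau$ label, so the only case that needs work is \textsc{Com}. Consider an early derivation ending in \textsc{Com}: its output premise is dealt with by the induction hypothesis for part~\ref{lemma:late-early:out-tau}, and its input premise $\framedtransempty{\frass{P}\ftimes\Psi}{Q}{\inn{K}{N}}{Q'}$ is, by part~\ref{lemma:late-early:in}, matched by a late transition $\framedtransempty{\frass{P}\ftimes\Psi}{Q}{\inlabel{K}{\ve{x}}{x}}{Q''}$ with $Q' = Q''\lsubst{N}{x}$ (choosing $x$ fresh for $\ve{a}$, $N$ and the other components); the late \textsc{Com} rule, with the same side conditions on $\ve{a}$ and the frames, then derives $\framedtransempty{\Psi}{P\pll Q}{\tau}{(\nu\ve{a})(P'\pll Q''\lsubst{N}{x})}$, and $(\nu\ve{a})(P'\pll Q''\lsubst{N}{x}) = (\nu\ve{a})(P'\pll Q')$ is exactly the conclusion of the early \textsc{Com}. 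The converse direction is symmetric, using part~\ref{lemma:late-early:in} the other way to replace a late input premise by an early one.

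The delicate part, and the only place where anything more than routine rewriting is needed, is the bookkeeping around the bound input name $x$: in the late system $x$ is genuinely bound into both the label and the derivative and may be alpha-renamed, whereas in the early system it has been substituted away. Thus in the early-to-late direction one must, at each \textsc{Par} or \textsc{Scope} step, first pick the representative $x$ of the reconstructed late transition fresh enough for the side condition of the late rule ($x\freshin Q$, resp.\ $x$ distinct from the restricted name), which is always possible because transitions are identified up to alpha-equivalence of $\bn{\alpha}$; dually, in a \textsc{Scope} step over an input going from late to early, the early rule demands $b\freshin N$ as well as $b\freshin K$, so if the chosen $N$ happens to mention $b$ one first alpha-converts $(\nu b)P$, relying on equivariance of the early transition relation. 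Once these freshness choices are in place every inductive step collapses to one application of the corresponding substitution identity, so I expect no genuine obstacle beyond this nominal bookkeeping.
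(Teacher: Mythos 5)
Your proposal is correct and follows essentially the same route as the paper: induction over the transition derivations, with part~\ref{lemma:late-early:in} handled by the input-producing rules alone and the $\tau$/\textsc{Com} case of part~\ref{lemma:late-early:out-tau} combining the induction hypothesis for outputs with part~\ref{lemma:late-early:in}, exactly as the paper indicates. Your extra care with the strengthened ``for every $N$'' formulation and the alpha-conversion/freshness bookkeeping in \textsc{Par} and \textsc{Scope} is sound and merely makes explicit what the paper's one-line proof sketch leaves implicit.
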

The proof is by induction over the transition derivations. In the proof of (\ref{lemma:late-early:out-tau}), the case $\alpha=\tau$ needs both (\ref{lemma:late-early:in}) and the case where $\alpha$ is an output.

\subsection{Bisimulation}
We proceed to define early bisimulation with the late semantics:

\begin{definition}[(Early) Bisimulation]
A {\em bisimulation}
 $\cal R$ is a ternary relation between assertions and pairs of agents such that
 ${\cal R}(\Psi,P,Q)$ implies all of
 \begin{enumerate}
 \item Static equivalence:
  $\Psi \ftimes \fr{P} \sequivalent \Psi \ftimes \fr{Q}$
 \item
   Symmetry: ${\cal R}(\Psi,Q,P)$
 \item
 Extension of arbitrary assertion: $\forall \Psi'.\; {\cal
R}(\Psi \ftimes
\Psi',P,Q)$
 \item
  Simulation: for all $\alpha$, $P'$ such that $\bn{\alpha} \freshin \Psi, Q$
\begin{enumerate}
\item if $\alpha = \inlabel{M}{\ve{x}}{x}$:\quad $\Psi \frames
\trans{P}{\alpha}{P'} \Longrightarrow\\ \forall L \exists Q'\; .\; \Psi
\frames
\trans{Q}{\alpha}{Q'}$ and
$\mathcal{R}(\Psi,P'\lsubst{L}{x},Q'\lsubst{L}{x})$.
\item otherwise: \quad $\Psi \frames \trans{P}{\alpha}{P'} \Longrightarrow
\exists Q'\; .\; \Psi \frames \trans{Q}{\alpha}{Q'}$ and
$\mathcal{R}(\Psi,P',Q')$.
\end{enumerate}
\end{enumerate}
 \label{def:bisim}
We define $P \bisim Q$ to mean that there exists a bisimulation ${\cal R}$ such
that
${\cal R}(\emptyframe,P,Q)$. We also define $P \sim Q$ to mean that
$P\lsubst{L}{x} \bisim
Q\lsubst{L}{x}$ for all $x, L$.
\end{definition}

The relation between this definition and the original definition of bisimulation in~\cite{bengtson.johansson.ea:psi-calculi} is clarified by the following:

\begin{lemma}
For the psi-calculi in the present paper, a relation is a bisimulation according to Def.~\ref{def:bisim} precisely if it is a bisimulation according to \cite{bengtson.johansson.ea:psi-calculi}.
\end{lemma}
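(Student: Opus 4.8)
The plan is to compare the two definitions clause by clause, observe that they agree except for the input part of the simulation requirement, and bridge that single gap with Lemma~\ref{lemma:late-early}. The clauses of static equivalence, symmetry and extension of an arbitrary assertion are stated identically in Def.~\ref{def:bisim} and in~\cite{bengtson.johansson.ea:psi-calculi}, so a relation $\mathcal{R}$ satisfies them in one sense exactly when it does in the other. For the simulation clause, Lemma~\ref{lemma:late-early}(\ref{lemma:late-early:out-tau}) says the output and $\tau$ transitions of any agent coincide in the early and the late semantics, and the matching demands on those actions are also literally the same in both definitions, so they too carry over verbatim. Hence the statement reduces to showing that a relation $\mathcal{R}$ satisfies the early input clause of~\cite{bengtson.johansson.ea:psi-calculi} iff it satisfies the late input clause of Def.~\ref{def:bisim}.

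For the direction from Def.~\ref{def:bisim} to~\cite{bengtson.johansson.ea:psi-calculi}, assume $\mathcal{R}(\Psi,P,Q)$ and $\framedtransempty{\Psi}{P}{\inn{M}N}{P'}$ in the early semantics. By Lemma~\ref{lemma:late-early}(\ref{lemma:late-early:in}) there are $x$ and $P_0$ with $\framedtransempty{\Psi}{P}{\inlabel{M}{\ve{x}}{x}}{P_0}$ in the late semantics and $P' = P_0\lsubst{N}{x}$, and by alpha-converting we may take $x \freshin \Psi, Q, N$. Instantiating the late input clause with $L := N$ yields $Q_0$ with $\framedtransempty{\Psi}{Q}{\inlabel{M}{\ve{x}}{x}}{Q_0}$ and $\mathcal{R}(\Psi, P_0\lsubst{N}{x}, Q_0\lsubst{N}{x})$, and a second use of Lemma~\ref{lemma:late-early}(\ref{lemma:late-early:in}) turns the late transition of $Q$ into $\framedtransempty{\Psi}{Q}{\inn{M}N}{Q_0\lsubst{N}{x}}$; taking $Q' := Q_0\lsubst{N}{x}$ and recalling $P' = P_0\lsubst{N}{x}$ discharges the early clause.

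Conversely, assume $\mathcal{R}$ is a bisimulation in the sense of~\cite{bengtson.johansson.ea:psi-calculi}, let $\framedtransempty{\Psi}{P}{\inlabel{M}{\ve{x}}{x}}{P'}$ with $x \freshin \Psi, Q$, and fix an arbitrary $L$ (after alpha-conversion we may also take $x \freshin L$). Lemma~\ref{lemma:late-early}(\ref{lemma:late-early:in}) gives $\framedtransempty{\Psi}{P}{\inn{M}L}{P'\lsubst{L}{x}}$; the early clause supplies $Q^{*}$ with $\framedtransempty{\Psi}{Q}{\inn{M}L}{Q^{*}}$ and $\mathcal{R}(\Psi, P'\lsubst{L}{x}, Q^{*})$; and Lemma~\ref{lemma:late-early}(\ref{lemma:late-early:in}), applied in the reverse direction, returns $x'$ and $Q_0$ with $\framedtransempty{\Psi}{Q}{\inlabel{M}{\ve{x'}}{x'}}{Q_0}$ and $Q^{*} = Q_0\lsubst{L}{x'}$.

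The one point requiring care, and the main obstacle of the argument, is that in this last step the bound input name $x'$ produced for $Q$ need not coincide with the name $x$ carried by the late transition of $P$. Since transitions are identified up to alpha-equivalence and, for a pure input transition, the derivative mentions no names beyond those of the source together with the bound input name, $x$ is fresh for $Q_0$; so we may rename, replacing $Q_0$ by $Q'' := (x'\;x)\cdot Q_0$, to obtain $\framedtransempty{\Psi}{Q}{\inlabel{M}{\ve{x}}{x}}{Q''}$ with $Q''\lsubst{L}{x} = Q_0\lsubst{L}{x'} = Q^{*}$. Then $\mathcal{R}(\Psi, P'\lsubst{L}{x}, Q''\lsubst{L}{x})$ is exactly what the late input clause of Def.~\ref{def:bisim} asks for this $L$. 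Checking that the freshness side conditions $\bn{\alpha} \freshin \Psi, Q$ common to both definitions are preserved by these alpha-conversions, and noting that the universal quantifier over $L$ in the late clause ranges over precisely the received terms appearing in early input labels, completes the proof.
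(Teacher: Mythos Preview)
Your proposal is correct and follows precisely the route the paper indicates: the paper's own proof is the single sentence ``The proof is straightforward using Lemma~\ref{lemma:late-early},'' and your argument is a faithful unpacking of that sentence, matching the non-input clauses verbatim and using Lemma~\ref{lemma:late-early}(\ref{lemma:late-early:in}) in both directions for the input case. The alpha-conversion step aligning the bound input name of $Q$'s late transition with that of $P$'s is the only point needing care, and your justification via the support bound on input derivatives is sound.
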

The proof is straightforward using Lemma~\ref{lemma:late-early}. As a corollary
the algebraic properties of $\sim$ established
in~\cite{bengtson.johansson.ea:psi-calculi} hold, notably that it is a
congruence.

\section{Symbolic semantics and equivalence}
\label{sec:symbolic}
The idea behind a symbolic semantics is to reduce the state space of agents. One
standard way is to avoid infinite branching in inputs by 
using a fresh name to represent whatever was received.

In psi-calculi there is an additional source of infinite branching:  a subject  
in a prefix may get rewritten
to many terms. Also here we use a fresh name to represent these terms.
This means that the symbolic actions are the same as
the concrete actions with the exception that only names are used as subjects.

A {\em symbolic transition} is of form 
\[\Psi \frames \transs{P}{\alpha}{C}{P'}\]
The intuition is that this represents a set of concrete transitions,
namely those that satisfy the constraint $C$. Before the formal
definitions we here briefly explain the rationale. Consider a
psi-calculus with integers and integer equations; for example a
condition can be ``$x=3$". An example agent is 
$P=\caseonly{x=3}:P'$.
If $\transs{P'}{\alpha}{\true}{P''}$, where $\true$ is a constraint that is
always true, then there should clearly be
a transition
$\transs{P}{\alpha}{C}{P''}$ for some constraint $C$ that captures
that
$x$ must be 3. One
context that can make
this constraint true is an
input, as in $a(x).P$. The input will give rise to a substitution for
$x$, and if the substitution sends $x$ to 3 the constraint is
satisfied. 
In this way the constraints are similar to those for the pi-calculus~\cite{boreale.de-nicola:symbolic-semantics,lin:symbolic-transition}.
In psi-calculi there is an additional way that a context
can enable the transition: it can contain an assertion as in
$\pass{x:=3}\,|\,P$. Concretely this agent has a
transition
$\trans{\pass{x:=3}\,|\,P}{\alpha}{\pass{x:=3}\,|\,P''}$ since $x:=3 \vdash
x=3$. 
Therefore a solution of a constraint will 
contain both a substitution of terms for names (representing the
effect of an input) and an assertion (representing the effect of a
parallel component).

\begin{definition}
The {\em atomic constraints} are of the form
$(\nu \ve{a})\constr{\Psi \vdash \varphi}$ where $\ve{a}$ are binding
occurrences into $\Psi$ and $\varphi$. A {\em solution} of an
atomic constraint is a pair $(\sigma, \Psi')$ where $\sigma$ is a
substitution of terms for names such that $\ve{a} \freshin \sigma, \Psi'$ and
$\Psi\sigma \ftimes \Psi' \vdash \varphi\sigma$. We adopt the notation $(\sigma,
\Psi) \models C$ to say that $(\sigma, \Psi)$ is a solution of $C$, and write
$\sol{C}$ for $\{(\sigma, \Psi): (\sigma, \Psi) \models C\}$.

The {\em transition constraints} are the atomic
constraints $C$ and conjunctions of atomic constraints $C \wedge C'$, where the
solutions are the intersection of the solutions for $C$ and $C'$ and we let
$(\nu \ve{a})(C \wedge C')$ mean $(\nu
\ve{a})C \wedge (\nu \ve{a})C'$.
\end{definition}
A transition constraint $C$ defines a set of solutions $\sol{C}$,
namely those where the entailment becomes true by applying the substitution and
adding the assertion.
For example, the transition constraint $\constr{\emptyframe \vdash x=3}$ has
solutions $([x:=3], \emptyframe)$ and $({\rm Id},\; x=3)$, where Id is the
identity substitution.

The structured operational symbolic
semantics is defined in Table \ref{table:symbolic-semantics}.
First consider
the {\sc Out} rule:
$\Psi \frames \transs{\out{M}{N}.P}{\out{y}{N}}{\constr{\Psi \vdash M \sch
y}}{P}$.
The symbolic subject $y$ must be chosen fresh and has a constraint
associated with it: the transition can be taken in any solution that implies
that the subject $M$ of the syntactic prefix is channel equivalent to $y$.

\begin{table}[tb]

\begin{mathpar}

\inferrule*[Left=\textsc{In}, right={$y \freshin \Psi,M,P,x$}]
    { }
    {\Psi \frames
\transs{\inprefix{M}{\ve{x}}{x}.P}{\inlabel{y}{\ve{x}}{x}}
{\constr{\Psi \vdash M \sch y}}{P}}~~
\inferrule*[left={\textsc{Case}}]
    {\Psi \frames \transs{P_i}{\alpha}{C}{P'} }
    {\Psi \frames
\transs{\caseonly{\ci{\ve{\varphi}}{\ve{P}}}}{\alpha}{C \wedge \constr{\Psi
\vdash \varphi_i}}{P'}}

\inferrule*[left=\textsc{Out}, right={$y \freshin \Psi,M,N,P$}]
    { }
    {\Psi \frames \transs{\out{M}{N}.P}{\out{y}{N}}{\constr{\Psi
\vdash M \sch y}}{P}}

\inferrule*[Left=\textsc{Com}, Right={$\inferrule{}{\ve{a} \freshin Q,
\\\\
y \freshin z\\\\
\Psi' = \Psi \ftimes \frass{P} \ftimes \frass{Q}\\\\
}$}]
    {\Psi \ftimes \frass{Q} \frames
\transs{P}{\sbout{y}{\ve{a}}{N}}{(\nu \ve{b_P})\constr{\Psi' \vdash M_P \sch
y} \wedge C_P}{P'} \\
     \Psi \ftimes \frass{P} \frames
\transs{Q}{\inlabel{z}{\ve{x}}{x}}{(\nu \ve{b_Q})\constr{\Psi' \vdash M_Q \sch
z}
\wedge C_Q}{Q'}
    }
    {\Psi \frames \transs{P \pll Q}{\tau}{(\nu
\ve{b_P},\ve{b_Q})\constr{\Psi'
\vdash M_P \sch M_Q} \wedge C_P \wedge C_Q}{(\nu
\ve{a})(P' \pll Q'\lsubst{N}{x})}}

\inferrule*[Left=\textsc{Par}, right={$\inferrule{}{\bn{\alpha} \freshin Q
\\\\ \alpha=\tau \vee \subj{\alpha} \freshin Q}$}]
    {\Psi \ftimes \frass{Q} \frames \transs{P}{\alpha}{C}{P'}}
    {\Psi \frames \transs{P \pll Q}{\alpha}{(\nu \frnames{Q})C}{P' \pll Q}}

\inferrule*[Left=\textsc{Scope}, right={$a \freshin \alpha, \Psi$}]
    {\Psi \frames \transs{P}{\alpha}{C}{P'}}
    {\Psi \frames \transs{(\nu a)P}{\alpha}{(\nu a)C}{(\nu a)P'}}

\inferrule*[left=\textsc{Open}, right={$\inferrule{}{a \in \n(N) \\\\ a
\freshin \ve{a},\Psi,y}$} ]
    {\Psi \frames \transs{P}{\sbout{y}{\ve{a}}{N}}{C}{P'}}
    {\Psi \frames \transs{(\nu a)P}{\sbout{y}{\ve{a} \cup a}{N}}{(\nu a)C
}{P'}}

\inferrule*[left=\textsc{Rep}]
   {\Psi \frames \transs{P \pll !P}{\alpha}{C}{P'}}
   {\Psi \frames \transs{!P}{\alpha}{C}{P'}}

\end{mathpar}
\caption{Transition rules for the symbolic semantics. Symmetric versions of
\textsc{Com} and \textsc{Par} are
elided. 
In the rule $\textsc{Com}$ we assume
that $\fr{P} = \framepair{\frnames{P}}{\frass{P}}$ and   $\fr{Q} =
\framepair{\frnames{Q}}{\frass{Q}}$ where $\frnames{P}$ is
fresh for all of $\Psi, \frnames{Q}, Q$ and $P$, and that $\frnames{Q}$ is
correspondingly
fresh. We also assume that $y,z \freshin
\Psi,\frnames{P},P,\frnames{Q},Q,N, \ve{b_P}, \ve{b_Q},\ve{a}$.
In
the rule \textsc{Par} we assume that $\fr{Q} =
\framepair{\frnames{Q}}{\frass{Q}}$
where $\frnames{Q}$ is
fresh for $\Psi, P$ and $\alpha$. 
In $\textsc{Open}$ the expression $\tilde{a} \cup \{a\}$ means the sequence
$\tilde{a}$ with $a$ inserted anywhere.}
\label{table:symbolic-semantics}
\end{table}

The rule \textsc{Com} is of particular interest. The intuition is that the
symbolic action subjects are placeholders for the values $M_P$ and $M_Q$.
In the conclusion the constraint is that these are channel
equivalent, while $y$ and $z$ will not occur again.

We will often write $P\sgto{\alpha}{C} P'$ for $\emptyframe\frames P\sgto{\alpha}{C} P'$.

\subsection{Symbolic bisimulation}

In order to define a symbolic bisimulation we need additional kinds of
constraints. If a process $P$ does a bound output
$\bout{\ve{a}}{y}{N}$ that is matched by a bound output $\bout{\ve{a}}{y}{N'}$
from $Q$ we need constraints that keep track of the fact that $N$ and $N'$
should be syntactically the same, and that $\ve{a}$ is sufficiently fresh.
\begin{definition}
The {\em constraints} include the transition constraints, 
 $\constr{M=N}$, and
$\constr{a \freshin X}$, where $X$ is any nominal data type. The solutions of
the last two are all pairs $(\Psi, \sigma)$ such that $M\sigma = N\sigma$
and $a \freshin (X\sigma)$ respectively. We also include conjunction of
constraints $C \wedge C'$, where the set of solutions is the
intersection of the
solutions for $C$ and $C'$.
\end{definition}
Note that the assertion part of the solution is irrelevant for constraints of kind $\constr{M=N}$ and
$\constr{a \freshin X}$, and that the substitution does not affect $a$ in $\constr{ a \freshin X}$.
The constraint $\constr{M=N}$ is
used in the bisimulation for matching output objects, and $\constr{a \freshin
X}$ is
used in the bisimulation for recording what an opened name must be fresh for.
This corresponds to distinctions in open bisimulation for the pi-calculus
\cite{sangiorgi:theory-bisimulation}.
We define $\true$ to be $\constr{M=M}$, we write $\constr{a \freshin X,Y}$ for
$\constr{a \freshin X}
\wedge \constr{a \freshin Y}$, and we extend the notation to sets of names, e.g.
$\constr{\ve{a}
\freshin X}$.

\begin{definition}[Constraint implication]
 A constraint $C$ \emph{implies} another constraint $D$, written $C \cimplies
D$, iff $\sol{C} \subseteq \sol{D}$.
%
We write $C \cimplies \bigvee \ve{C}$ iff
 for
each $(\sigma,\Psi) \in \sol{C}$ there exists a $C' \in \ve{C}$ such that $(\sigma,\Psi) \in
\sol{C'}$.
\end{definition}

Before we can give the definition of symbolic
bisimulation we need to define a symbolic variant of  the concrete static equivalence.

\begin{definition}[Symbolic static equivalence]
 \label{def:ssequivalent}
 Two processes $P$ and $Q$ are \emph{statically equivalent} for $C$, written
$P \sequivalent_C Q$, if for each $(\sigma,\Psi) \in \sol{C}$
we have that $\Psi \ftimes \fr{P}\sigma \sequivalent \Psi \ftimes
\fr{Q}\sigma$.
\end{definition}

We now have everything we need to define symbolic bisimulation. This definition
follows the definition in \cite{hennessy.lin:symbolic-bisimulations} closely.

\begin{definition}[(Early) Symbolic bisimulation]\label{def:symbisim}
A {\em symbolic bisimulation}
 $\cal S$ is a ternary relation between constraints and pairs of agents such
that ${\cal S}(C,P,Q)$ implies all of

\begin{enumerate}
\item $P \sequivalent_C Q$, and
\item ${\cal S}(C,Q,P)$, and
\item If $\transs{P}{\alpha}{C_P}{P'}$,
$\bn{\alpha} \freshin (P,Q,C,C_P,\subj{\alpha})$ and $\subj{\alpha}
\freshin (P,Q,C)$ then there exists a set of constraints
$\widehat{C}$ such that 
$C \wedge C_P \cimplies \bigvee \widehat{C}$ \\
and for all
$C' \in \widehat{C}$ there exists $Q'$, $\alpha'$, and $C_Q$ such that

\begin{enumerate}
\item $\transs{Q}{\alpha'}{C_Q}{Q'}$, and
\item $C' \cimplies C_Q$, and
\item if $\alpha = \bout{\ve{a}}{y}{N}$ then $\alpha' =
\bout{\ve{a}}{y}{N'}$, $C' \cimplies \constr{N=N'}$, \\
$\quad$ and $(C' \wedge
\constr{\ve{a} \freshin P,Q}, P', Q') \in {\cal S}$

otherwise $\alpha = \alpha'$ and $(C', P', Q') \in {\cal S}$
\end{enumerate}
\end{enumerate}
We write $P\sim_s Q$ if $(\true, P,Q)\in\mathcal{S}$ for some symbolic
bisimulation $\mathcal{S}$, and say that $P$ is \emph{symbolically
bisimilar} to $Q$.
\end{definition}

The set $\widehat{C}$ allows a case analysis on the constraint solutions, as examplified in the next section. 
The output objects need to be equal in a solution to $C'$. Since
the solutions of  $\constr{N=N'}$ only depend on the substitutions, this constraint corresponds to the fact that the objects must be identical in the concrete bisimulation.
Note that $\bn{\alpha}$ may occur in $\widehat{C}$. Based on
\cite{boreale.de-nicola:symbolic-semantics,lin:symbolic-transition}, we
conjecture that adding the requirement
$\bn{\alpha} \freshin \widehat{C}$ would give late symbolic bisimulation.

\section{Examples}
\label{sec:examples}
We now look at a few examples to illustrate the concrete and
symbolic transitions and bisimulations. First consider a simple example from the pi-calculus.
This can be expressed as a psi-calculus: let the only data terms be
names,  the only
assertion be $\emptyframe$, the conditions be equality and inequality tests on
names, and entailment defined by $\forall a.\emptyframe\vdash a=a$, $\forall a,b:a\neq b.\emptyframe\vdash a\neq b$ and $\forall
a.\emptyframe\vdash a\sch a$. For a more thorough discussion,
see \cite{bengtson.johansson.ea:psi-calculi}. 
In the following examples we drop a trailing $.\nil$.
Consider the two agents $P_1$ and $Q_1$:
\[
\begin{array}{rl@{\quad\text{where }}rcl}
 P_1 &= a(x)\sdot P_1' & P_1'&=&\out{a}{b}\sdot \out{a}{b}\\
 Q_1 &= a(x)\sdot Q_1' &
Q_1'&=&(\caseonly{\ci{x=b}{\out{a}{b}\sdot \out{a}{b}}\;\casesep\;\ci{x\neq
b}{\out{a}{b} \sdot \out{a}{b}}})
\end{array}
\]
These are bisimilar. A concrete bisimulation between these agents is
\[
 \{ (\emptyframe,P_1,Q_1) \} \cup \bigcup_{n \in \N} \{ (\emptyframe,
P_1', Q_1'\lsubst{n}{x} \}\: \cup \{ (\emptyframe,\out{a}{b},\out{a}{b}) \}
\]
The bisimulation needs to be infinite because of the infinite branching in the input.
In contrast, a
symbolic bisimulation only contains four triples:
\[
\left\{
\begin{array}{@{}c@{\;\;}c@{\;\;}c@{\;\;}c@{}}
(\true,P_1,Q_1), & (\true, P_1', Q_1'), &
(\constr{\emptyframe \vdash x=b},\out{a}{b},\out{a}{b}), & (\constr{\emptyframe \vdash x\neq b},\out{a}{b},\out{a}{b})\\
\end{array}
\right\}
\]

When checking the second triple $(\true,P_1',Q_1')$, the transition of
$P_1'$ is matched by a case analysis: $\widehat{C}$ in the definition of symbolic bisimulation (Def.~\ref{def:symbisim}) is
$\{\constr{\emptyframe \vdash x=b}, \constr{\emptyframe \vdash x\neq
b}\}$, and a matching transition for $Q_1'$ can be found for each of
these cases, so the agents are bisimilar.
In contrast,
they are not equivalent in the incomplete
symbolic bisimulations in
\cite{borgstroem.briais.ea:symbolic-bisimulation} and
\cite{DBLP:conf/fsttcs/DelauneKR07}.

Next we look at an example where we have tuples of channels and projection,
e.g. the entailment relation gives us that $\emptyframe \vdash \sym{first}{M,N}
\sch M$. Consider the agent
\[
 R = \out{M}{N}\sdot R'
\]
Concretely this agent has infinitely many transitions even in an empty frame: $\trans{R}{\out{M}{N}}{R'}$, 
\iftrue 
and equivalent actions $\out{\sym{first}{M,K}}{N}$ 
for all $K$, and
$\out{\sym{first}{\sym{first}{M,L},K}}{N}$ 
for all $L$ and $K$, etc.
\else
and equivalently $\trans{R}{\out{\sym{first}{M,K}}{N}}{R'}$ 
for all $K$, and
$\trans{R}{\out{\sym{first}{\sym{first}{M,L},K}}{N}}{R'}$ 
for all $L$ and $K$, etc.
\fi
Symbolically, however, it has
only one transition: $\transs{R}{\out{y}{N}}{\constr{\emptyframe \vdash M \sch y}}{R'}$.

For another example, consider the two agents
\[
\begin{array}{rl@{\qquad}rl}
 P_2 &= \out{F}N\sdot P' &
 Q_2 &= \nil
\end{array} 
\]
where $F$ is a term such that for no $\Psi,M$ does it hold that $\Psi \vdash F \sch M$, i.e., $F$ is not a channel. Then we have that $P_2$ and $Q_2$ are concretely bisimilar since
neither one of them has a transition. But symbolically $P_2$ has the
transition $\transs{P_2}{\out{y}{N}}{\constr{\emptyframe \vdash F \sch y}}{P'}$, while
$Q_2$ has no symbolic transition. Perhaps surprisingly they are still symbolically
bisimilar: Def.~\ref{def:symbisim} requires that we find a
disjunction $\widehat{C}$ such that $C \wedge C_P \cimplies \bigvee\widehat{C}$, or in
this case such that $\true \wedge \constr{\emptyframe \vdash F \sch y}
\cimplies \bigvee\widehat{C}$. Since $F$ is not channel equivalent to
anything, the left hand side has no solutions, which means that any set
$\widehat{C}$ will do, and in particular the empty one.
The condition ``for all $C'\in\widehat{C}\,$" in the definition becomes
trivially true, so $Q_2$ does not have to mimic the transition.

A final example shows the use of cryptographic primitives. Here
 the terms contains $\sym{enc}{M,k}$ and $\sym{dec}{M,k}$,
assertions are variable assignments, e.g. $x := M$, the conditions are
equality tests between terms, and the entailment relation is parametrised by an
equation system which contains the equation $\sym{dec}{\sym{enc}{M,k},k}=M$. Consider
\[
\begin{array}{ll}
 P_3 &= (\nu a,k)\,(\pass{x := \sym{enc}{a,k}} \,\pll\,
b(z)\,.\,\out{b}{k}\,.\, (\caseonly{\ci{z=a}{\out{c}{d}}}))\\
 Q_3 &= (\nu a,k)\,(\pass{x := \sym{enc}{a,k}} \,\pll\,
b(z)\,.\,\out{b}{k})
\end{array}
\]
Here the environment can use $x$, the result of encrypting $a$ with $k$, but not the bound $a$ or $k$.
Intuitively these agents are bisimilar since the key $k$ is not revealed until
after the agents receive $z$, which therefore cannot be equal to $a$. The first
symbolic transitions of the agents are
\[
 \begin{array}{ll}
  \transs{P_3}{y(z)}{(\nu a,k)\constr{\emptyframe \vdash b
\sch y}}{(\nu a,k)(\pass{x := \sym{enc}{a,k}} \,\pll\,
\out{b}{k}\,.\,(\caseonly{\ci{z=a}{\out{c}{d}}}))} & = P_3'\\
   \transs{Q_3}{y(z)}{(\nu a,k)\constr{\emptyframe \vdash b
\sch y}}{(\nu a,k)(\pass{x := \sym{enc}{a,k}} \,\pll\, \out{b}{k})} & = Q_3'
 \end{array}
\]
and the second transitions are
\[
 \begin{array}{ll}
   \transs{P_3'}{\out{y'}{(\nu k)k}}{(\nu a,k)\constr{\emptyframe
\vdash b \sch
y'}}{(\nu a)(\pass{x := \sym{enc}{a,k}} \,\pll\,
(\caseonly{\ci{z=a}{\out{c}{d}}}))} & =
P_3''\\
   \transs{Q_3'}{\out{y'}{(\nu k)k}}{(\nu a,k)\constr{\emptyframe
\vdash b \sch
y'}}{(\nu a)(\pass{x := \sym{enc}{a,k}})} & = Q_3''
 \end{array}
\]
A symbolic
bisimulation, where we for simplicity ignore the constraints that arise for
subjects, is
\[
\{(\true,P_3, Q_3),\quad (\true, P_3', Q_3'),\quad (\constr{k\freshin P_3',Q_3'}, P_3'', Q_3'')\}
\]
Here the constraint $\constr{k \freshin P_3',Q_3'}$ will among other things imply
that $k \freshin z$. The final transition of $P_3''$
has the constraint
$(\nu a)\constr{\emptyframe \vdash z=a}$, so 
we must find a
disjunction $\widehat{C}$ such that
$k \freshin P_3',Q_3' \wedge (\nu a)\constr{\emptyframe \vdash z=a} \cimplies \widehat{C}$.
Since $a$ is bound, the only way to find a solution to
the left hand side is to find a value for $z$ that evaluates to $a$. One
candidate for a solution is $(\lsubst{\sym{dec}{x,k}}{z}, \emptyframe)$, but
because of the constraint $k \freshin z$ this does not work. In fact, 
there is no
solution to the left hand side because of the freshness
constraint on $k$ and the fact that $a$ is bound. This means that, as in the previous example, any
disjunction
$\widehat{C}$ will do, and in particular the empty disjunction, and trivially
$Q_3''$ does not have to mimic the transition.

In contrast, if we swap the order of the inputs and the outputs in $P_3$
and $Q_3$ and try to construct the bisimulation relation we will discover that
we do not get the constraint $k \freshin z$. This means that 
$(\lsubst{\sym{dec}{x,k}}{z}, \emptyframe)$ is a solution
to $C \wedge C_P$ in the definition of bisimulation, and that $Q_3''$ must
mimic the transition from $P_3''$. In this case the agents are not bisimilar.

\section{Results}
\label{sec:results}
We now turn to showing that the concrete and symbolic equivalences coincide.

We define substitution on symbolic actions by
$\tau\sigma = \tau$,
$(\inlabel{y}{\ve{x}}{x})\sigma = \inlabel{y\sigma}{\ve{x}}{x\sigma}$, and
$(\bout{\ve{a}}{y}{N})\sigma = \bout{\ve{a}}{y\sigma}{N\sigma}$,
where $x,\ve{a} \freshin \sigma$.
We define the substitution $\sigma \cdot \lsubst{M}{y}$ for $y \freshin \sigma$
by
$(\sigma\cdot\lsubst{M}{y})(x) = M$ if $x=y$, and $\sigma(x)$ otherwise.

The following two lemmas show the operational correspondence between the
symbolic semantics and the concrete semantics: given a symbolic transition
where the transition constraint has a solution, there is
always a corresponding concrete transition
(Lemma~\ref{lemma:correspondence-symbolic-concrete}) and vice versa
(Lemma~\ref{lemma:correspondence-concrete-symbolic}).

\begin{lemma}[Correspondence symbolic-concrete]\mbox{}\\
\label{lemma:correspondence-symbolic-concrete}
\vspace{-1\baselineskip}
\begin{enumerate}
 \item If $\transs{P}{\inlabel{y}{\ve{x}}{x}}{C}{P'}$ then for all $(\sigma,
\Psi) \in
\sol{C}$ s.t. $x \freshin \sigma$
we have that $\Psi \frames
\trans{P\sigma}{(\inlabel{y}{\ve{x}}{x})\sigma}{P'\sigma}$.
 \item If $\transs{P}{\bout{\ve{a}}{y}{N}}{C}{P'}$ then for all $(\sigma, \Psi)
\in \sol{C}$ s.t. $\ve{a} \freshin
\sigma$ we have that $\Psi \frames
\trans{P\sigma}{(\bout{\ve{a}}{y}{N})\sigma}{P'\sigma}$.
\item If $\transs{P}{\tau}{C}{P'}$ then for all $(\sigma,
\Psi) \in \sol{C}$ we have that $\Psi
\frames
\trans{P\sigma}{\tau}{P'\sigma}$.
\end{enumerate}
\end{lemma}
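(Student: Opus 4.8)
The plan is to prove the three clauses simultaneously, by induction on the derivation of the symbolic transition, but for a statement generalised to an arbitrary environment assertion on the symbolic side: if $\Psi_0 \frames \transs{P}{\alpha}{C}{P'}$ and $(\sigma,\Psi) \in \sol{C}$ with $\bn{\alpha} \freshin \sigma$, then $\Psi_0\sigma \ftimes \Psi \frames \trans{P\sigma}{\alpha\sigma}{P'\sigma}$ in the concrete late semantics. The lemma is the special case $\Psi_0 = \emptyframe$, using that $\emptyframe\sigma \ftimes \Psi \sequivalent \Psi$ and that concrete transitions are closed under $\sequivalent$ of the environment (a basic property of psi-calculi, \cite{bengtson.johansson.ea:psi-calculi}, provable by a routine induction since every rule inspects the environment only through $\vdash$ and $\ftimes$). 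The generalisation is unavoidable because the \textsc{Par} and \textsc{Com} rules derive their premises under $\Psi_0 \ftimes \frass{Q}$, respectively $\Psi_0 \ftimes \frass{P}$, even when the conclusion is under $\emptyframe$; and the three clauses must be proved together because the $\tau$-producing rule \textsc{Com} has one input and one output premise. Throughout we use the standard substitution and equivariance properties of the nominal data types, in particular that substitution commutes with $\fr{\cdot}$, $\sch$ and $\ftimes$ modulo the bound names of frames, and we freely alpha-convert bound names to be fresh for $\sigma$ and $\Psi$.

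The base and unary cases are direct. For \textsc{In} and \textsc{Out}, a solution of $\constr{\Psi_0 \vdash M \sch y}$ satisfies exactly $\Psi_0\sigma \ftimes \Psi \vdash M\sigma \sch y\sigma$, which is the side condition needed to fire the concrete rule with concrete subject $K := y\sigma$; the hypothesis $\bn{\alpha} \freshin \sigma$ ensures $x\sigma = x$, so that the input label and derivative substitute correctly. For \textsc{Case}, a solution of $C \wedge \constr{\Psi_0 \vdash \varphi_i}$ is a solution of $C$ (yielding a concrete transition of $P_i\sigma$ by the induction hypothesis) and also validates $\Psi_0\sigma \ftimes \Psi \vdash \varphi_i\sigma$, so concrete \textsc{Case} applies. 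For \textsc{Par}, \textsc{Scope}, \textsc{Open} and \textsc{Rep} the conclusion constraint is the premise constraint with zero or one $(\nu\,\cdot\,)$ wrappers added; a solution of the conclusion constraint is a solution of the premise constraint whose $\sigma$ and $\Psi$ are fresh for the wrapped names, the induction hypothesis yields a concrete transition of the subagent, and the corresponding concrete rule rebuilds the transition of $P\sigma$. The concrete side conditions ($\bn{\alpha\sigma} \freshin Q\sigma$ for \textsc{Par}, $b \freshin \alpha\sigma$ and the environment for \textsc{Scope}, $b \in \n(N\sigma)$ for \textsc{Open}) all follow from the symbolic side conditions plus the freshness of the relevant names for $\sigma$.

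The main obstacle is the \textsc{Com} case, where the symbolic subjects $y,z$ are placeholders but concrete \textsc{Com} requires genuine channel-equivalent subjects for its two premises. Given a solution $(\sigma,\Psi)$ of $(\nu \ve{b_P},\ve{b_Q})\constr{\Psi' \vdash M_P \sch M_Q} \wedge C_P \wedge C_Q$ (where $\Psi' = \Psi_0 \ftimes \frass{P} \ftimes \frass{Q}$), the idea is to feed to the output premise the extended substitution $\sigma \cdot \lsubst{M_P\sigma}{y}$ and to the input premise $\sigma \cdot \lsubst{M_Q\sigma}{z}$; after alpha-converting $\ve{b_P},\ve{b_Q},y,z$ and the input variable $x$ to be fresh enough, one checks that these are solutions of the premise constraints $(\nu\ve{b_P})\constr{\Psi' \vdash M_P \sch y} \wedge C_P$ and $(\nu\ve{b_Q})\constr{\Psi' \vdash M_Q \sch z} \wedge C_Q$ --- the channel-equivalence conjunct being satisfied precisely because $y$, respectively $z$, is sent to its own right-hand side. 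The induction hypothesis (clauses 2 and 1) then gives concrete output and input transitions of $P\sigma$ and $Q\sigma$ with subjects $M_P\sigma$ and $M_Q\sigma$ under the environments $\Psi_0\sigma \ftimes \frass{Q}\sigma \ftimes \Psi$ and $\Psi_0\sigma \ftimes \frass{P}\sigma \ftimes \Psi$; the conjunct $(\nu\ve{b_P},\ve{b_Q})\constr{\Psi' \vdash M_P \sch M_Q}$ of the conclusion constraint, evaluated at $(\sigma,\Psi)$, yields $\Psi'\sigma \ftimes \Psi \vdash M_P\sigma \sch M_Q\sigma$, which is the channel-equivalence premise of concrete \textsc{Com}. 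Finally one verifies that $\lsubst{N}{x}$ commutes with $\sigma$ (as $x \freshin \sigma,N$), so the derivative $(\nu\ve{a})(P' \pll Q'\lsubst{N}{x})$ becomes $(\nu\ve{a})(P'\sigma \pll Q'\sigma\lsubst{N\sigma}{x})$, and that $\ve{a}$ is fresh for $\sigma$ and the frames, matching the concrete \textsc{Com} side conditions. The only genuinely delicate bookkeeping is tracking the binders $\ve{b_P},\ve{b_Q}$ (which record restricted names occurring inside the real subjects) and the several $\ftimes$-reorderings, for which the composition, associativity and commutativity requisites together with closure of concrete transitions under $\sequivalent$ suffice.
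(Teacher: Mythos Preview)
Your approach is exactly the paper's: an induction over the symbolic derivation, one case per rule. The paper gives no further detail, so your generalisation to an arbitrary environment assertion $\Psi_0$ and the simultaneous treatment of the three clauses are precisely the missing pieces that make the \textsc{Par} and \textsc{Com} cases go through; this is the natural (indeed, only) way to carry out the induction the paper indicates.

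One technical point in your \textsc{Com} case deserves a closer look. With your choice $\sigma_1 = \sigma \cdot \lsubst{M_P\sigma}{y}$ for the output premise, the pair $(\sigma_1,\Psi)$ need not be a \emph{solution} of $(\nu\ve{b_P})\constr{\Psi' \vdash M_P \sch y}$ in the formal sense: the definition requires $\ve{b_P}\freshin\sigma_1$, but $M_P$ may well contain names of $\ve{b_P}$ (think of $P = (\nu c)\,\out{c}{N}.P'$, where $c\in\ve{b_P}$ and $M_P=c$), and alpha-converting $\ve{b_P}$ just moves the problem. The cleanest fix is to cross the targets: send $y$ to $M_Q\sigma$ and $z$ to $M_P\sigma$. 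Then $\ve{b_P}\freshin M_Q$ and $\ve{b_Q}\freshin M_P$ follow from the frame-freshness side conditions of \textsc{Com}, so the extended substitutions are genuine solutions; the entailments required become $\Psi'\sigma\ftimes\Psi\vdash M_P\sigma\sch M_Q\sigma$ and its symmetric, which you already have from the conclusion constraint; and the concrete \textsc{Com} channel condition is the same up to channel symmetry. With this swap your argument is complete.
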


\begin{lemma}[Correspondence concrete-symbolic]\mbox{}\\
\label{lemma:correspondence-concrete-symbolic}
\vspace{-1\baselineskip}
\begin{enumerate}
 \item   If $\Psi \frames
\trans{P\sigma}{\inlabel{M}{\ve{x}}{x}}{P'\sigma}$,\quad $y \freshin
P,\sigma,M,N,x$,\quad
and $x \freshin \sigma, P$
then there exists $\ve{b}, M_P$, and $C_P$  such that
\\$\transs{P}{\inlabel{y}{\ve{x}}{x}}{(\nu
\ve{b})\constr{\frass{P} \vdash M_P \sch y}\wedge C_P}{P'}$
and $(\sigma\cdot\lsubst{M}{y},\Psi) \in \sol{(\nu
\ve{b})\constr{\frass{P}
\vdash M_P \sch
y} \wedge C_P}$.
 \item   If $\Psi \frames
\trans{P\sigma}{\bout{\ve{a}}{M}{N\sigma}}{P'\sigma}$,\quad $y \freshin
P,\sigma,M,\ve{a}$,\quad and $\ve{a} \freshin \sigma, P$
then there exists $\ve{b}, M_P$, $C_P$ such that
\\$\transs{P}{\bout{\ve{a}}{y}{N}}{(\nu
\ve{b})\constr{\frass{P} \vdash M_P \sch y} \wedge C_P}{P'}$
 and
$(\sigma\cdot\lsubst{M}{y},\Psi) \in \sol{(\nu
\ve{b})\constr{\frass{P}
\vdash M_P \sch y} \wedge C_P}$.
 \item   If $\Psi \frames \trans{P\sigma}{\tau}{P'\sigma}$
then there exists $C$ such that $\transs{P}{\tau}{C}{P'}$
 and $(\sigma,\Psi) \in \sol{C}$.
\end{enumerate}
We assume in 1 and 2 that $\fr{P} = \frameng{\frnames{P}}{\frass{P}}$ and
$\frnames{P},\ve{b} \freshin y,\Psi,\sigma,P$.
\end{lemma}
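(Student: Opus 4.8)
The plan is to prove the three statements together, by induction on the depth of the derivation of the concrete transition; a simultaneous induction is forced because the concrete rule \textsc{Com} derives a $\tau$-transition from an output premise and an input premise, so part~3 calls on parts~1 and~2, whereas parts~1 and~2 only refer to themselves. The skeleton of every case is the same: since substitution only alpha-renames binders, $P\sigma$ has the same outermost constructor as $P$, so the last rule of the concrete derivation is dictated by the shape of $P$; we invert on it, commute $\sigma$ inward through the constructor, appeal to the induction hypothesis on the (strictly smaller) sub-derivations, and reassemble the symbolic transition with the homonymous rule of Table~\ref{table:symbolic-semantics}. Two routine auxiliary facts are used throughout and may be proved by the same kind of induction (or quoted from the meta-theory): that $\fr{\cdot}$ commutes with substitution up to alpha-renaming of the frame's bound names, so $\fr{P\sigma}$ and $(\fr P)\sigma$ agree; and that every object and every derivative produced by a concrete rule applied to $P\sigma$ is $\sigma$ applied to the corresponding subterm of $P$, which is what allows the symbolic transition to land on an agent $P'$ with $P'\sigma$ the concrete derivative and to carry an object $N$ with $N\sigma$ the concrete object. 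In the \textsc{Par} and \textsc{Com} cases the induction hypothesis is applied to a premise whose environment has been enlarged by a sibling's frame assertion; matching this up with the environment threaded by the symbolic \textsc{Par}/\textsc{Com} rules uses the simple observation that replacing the ambient assertion $\Psi_0$ of a symbolic transition by $\Psi_0 \ftimes \Psi_1$ amounts to folding $\Psi_1$ into the assertion of each atomic constraint, i.e.\ moving $\Psi_1$ from the assertion component of a solution into the constraint.

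For the base cases \textsc{In} and \textsc{Out} the symbolic rule of the same name applies directly: here $\fr P = \emptyframe$, so we take $\ve b = \epsilon$, $C_P = \true$, and $M_P$ the syntactic subject of the prefix, and the only obligation is that $(\sigma\cdot\lsubst{M}{y},\Psi)$ solves $\constr{\emptyframe \vdash M_P \sch y}$; unfolding $\sol{\cdot}$ and using $y \freshin \sigma$ together with the monoid identity, this is literally the premise $\Psi \vdash M_P\sigma \sch M$ of the concrete rule. For \textsc{Case} one applies the induction hypothesis to the active branch $P_i$, wraps with symbolic \textsc{Case}, and conjoins $\constr{\emptyframe \vdash \varphi_i}$; it is solved at $(\sigma,\Psi)$ by the concrete side condition $\Psi \vdash \varphi_i\sigma$. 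For \textsc{Scope}, \textsc{Open} and \textsc{Par} the constraint returned by the hypothesis is merely prefixed with a $(\nu\cdot)$ (or, for \textsc{Par}, with $(\nu\frnames Q)$) and threaded along; the side conditions of the symbolic rules ($b$ or $\frnames Q$ fresh for the action and the environment) follow after alpha-converting bound names, which is legitimate since those names are fresh for $\sigma$ by hypothesis, and for \textsc{Open} one uses $b \in \n(N)$, inherited from $b \in \n(N\sigma)$ via $b \freshin \sigma$. For \textsc{Rep} the hypothesis is applied to $P \pll {!P}$ and nothing else changes.

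The one genuinely delicate case is \textsc{Com}, and I expect it to be the \emph{main obstacle}. The concrete derivation gives a bound output of $P_1\sigma$ with subject $M$ and object $N\sigma$, an input of $P_2\sigma$ with subject $K$, and the side condition $\Psi \ftimes \frass{P_1}\sigma \ftimes \frass{P_2}\sigma \vdash M \sch K$. Applying part~2 of the induction hypothesis to the first premise (whose environment is $\frass{P_2}\sigma \ftimes \Psi$) yields a symbolic output of $P_1$ with a fresh subject $y$ whose constraint implies an atom of the form $\constr{\frass{P_1} \vdash M_{P_1} \sch y}$ (under suitable $(\nu\cdot)$-binders), and part~1 applied to the second yields a symbolic input of $P_2$ with a fresh subject $z$ whose constraint implies $\constr{\frass{P_2} \vdash M_{P_2} \sch z}$; feeding these into symbolic \textsc{Com} produces the new conjunct $\constr{\Psi' \vdash M_{P_1} \sch M_{P_2}}$ with $\Psi' = \frass{P_1} \ftimes \frass{P_2}$. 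Instantiating the two hypothesis-solutions at $y \mapsto M$ and $z \mapsto K$ turns them into $\Psi'\sigma \ftimes \Psi \vdash M_{P_1}\sigma \sch M$ and $\Psi'\sigma \ftimes \Psi \vdash M_{P_2}\sigma \sch K$, and chaining these with the concrete side condition through channel symmetry and transitivity gives $\Psi'\sigma \ftimes \Psi \vdash M_{P_1}\sigma \sch M_{P_2}\sigma$ — which is exactly the assertion that $(\sigma,\Psi)$ solves $\constr{\Psi' \vdash M_{P_1} \sch M_{P_2}}$, the auxiliary symbolic subjects $y,z$ having disappeared by freshness. The remaining conjuncts $C_{P_1}, C_{P_2}$ survive from the hypotheses, the derivative $(\nu\ve a)(P_1' \pll P_2'\lsubst{N}{x})$ is $P'\sigma$ by the factoring observation above, and the scoping of $\ve a$ and of all the frame and constraint binders is managed exactly as in \textsc{Par} and \textsc{Open}. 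The difficulty here is precisely the juggling of the several places where assertions accumulate — the ambient $\Psi$, the two siblings' frames, and the residual assertion component of the symbolic solution — so that the two channel equivalences delivered by the induction hypotheses compose, via channel transitivity, into the single equivalence the symbolic \textsc{Com} rule asks for; everything else is mechanical, driven by the lemma's freshness hypotheses and alpha-conversion, and no use of the weakening requisite is needed here.
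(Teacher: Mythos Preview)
Your proposal is correct and takes essentially the same approach as the paper, which says only that the proof is ``by induction over the transition derivation (one case for each rule)''. Your elaboration---in particular identifying \textsc{Com} as the crux and using channel symmetry and transitivity to chain the two channel equivalences delivered by the induction hypotheses through the concrete side condition $\Psi \ftimes \frass{P}\sigma \ftimes \frass{Q}\sigma \vdash M \sch K$---is a faithful expansion of that one line.

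One small remark: the lemma as stated fixes the symbolic environment to $\emptyframe$, whereas the premises of the symbolic \textsc{Par} and \textsc{Com} rules carry a nontrivial environment. You handle this via your ``folding'' observation (shifting an assertion between the ambient $\Psi$ of a symbolic transition and the assertion inside each atomic constraint). That works, but be aware that the alternative is simply to strengthen the inductive statement to an arbitrary symbolic environment $\Psi_0$, with the constraint's leading atom carrying $\Psi_0 \ftimes \frass{P}$; either route is fine, and the paper does not commit to one. Also note a cosmetic wrinkle: the assertion appearing in the leading atomic constraint produced by the symbolic rules is only \emph{equivalent} (via associativity/commutativity/identity) to $\frass{P}$, not syntactically equal, so the match with the lemma's displayed form is up to $\sequivalent$---harmless for solutions, but worth stating once.
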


The proofs are by induction over the transition derivation (one case for each
rule).

\begin{theorem}[Soundness]
\label{theorem:soundness}
Assume $\mathcal{S}$ is a symbolic bisimulation and let \\ $\mathcal{R} = \{(\Psi,
P\sigma, Q\sigma) : \exists C . (\sigma,\Psi)
\models C \text{ and } (C,P,Q) \in \mathcal{S}\}$.
Then $\mathcal{R}$ is a concrete bisimulation.
\end{theorem}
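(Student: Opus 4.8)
The plan is to verify directly that the relation $\mathcal{R}$ satisfies all four clauses of Definition~\ref{def:bisim}. Symmetry (clause 2) is immediate: if $(\Psi, P\sigma, Q\sigma) \in \mathcal{R}$ via $C$, then $(C, Q, P) \in \mathcal{S}$ by symmetry of $\mathcal{S}$, and $(\sigma,\Psi) \models C$ still holds, so $(\Psi, Q\sigma, P\sigma) \in \mathcal{R}$. For static equivalence (clause 1), suppose $(\Psi, P\sigma, Q\sigma) \in \mathcal{R}$ via $C$; since $(C,P,Q) \in \mathcal{S}$ we have $P \sequivalent_C Q$, and since $(\sigma,\Psi) \in \sol{C}$, Definition~\ref{def:ssequivalent} gives exactly $\Psi \ftimes \fr{P}\sigma \sequivalent \Psi \ftimes \fr{Q}\sigma$. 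Because $\fr{P}\sigma = \fr{P\sigma}$ (a routine fact about how frames commute with substitution, up to alpha-renaming of the bound frame names away from $\sigma$), this is the required $\Psi \ftimes \fr{P\sigma} \sequivalent \Psi \ftimes \fr{Q\sigma}$. For extension of arbitrary assertion (clause 3), given $(\Psi, P\sigma, Q\sigma) \in \mathcal{R}$ via $C$ and an arbitrary $\Psi'$, I need $(\Psi \ftimes \Psi', P\sigma, Q\sigma) \in \mathcal{R}$; here I would use the fact that if $(\sigma,\Psi) \models C$ then $(\sigma, \Psi \ftimes \Psi') \models C$, which holds for atomic constraints $(\nu\ve a)\constr{\Psi_0 \vdash \varphi}$ by Weakening (this is the one place Weakening is needed, as the paper warns), and propagates to conjunctions; the constraints of the forms $\constr{M=N}$ and $\constr{a \freshin X}$ ignore the assertion component entirely.

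The substantive clause is Simulation (clause 4). Suppose $(\Psi, P\sigma, Q\sigma) \in \mathcal{R}$ via $C$, and $\Psi \frames \trans{P\sigma}{\alpha}{R}$ with $\bn{\alpha} \freshin \Psi, Q\sigma$. I split on the action kind. If $\alpha$ is an output $\bout{\ve a}{M}{N\sigma}$ (matching the shape so that $R = P'\sigma$), I apply Lemma~\ref{lemma:correspondence-concrete-symbolic}(2) with a fresh symbolic subject $y$: this yields a symbolic transition $\transs{P}{\bout{\ve a}{y}{N}}{C_P}{P'}$ with $C_P = (\nu\ve b)\constr{\frass P \vdash M_P \sch y} \wedge \ldots$ and $(\sigma \cdot \lsubst M y, \Psi) \in \sol{C_P}$. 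Now $(\sigma \cdot \lsubst M y, \Psi)$ is a common solution of $C$ (since $y$ is fresh for $\sigma$, adding $\lsubst M y$ does not disturb $(\sigma,\Psi) \models C$) and of $C_P$, hence of $C \wedge C_P$. Applying clause 3 of Definition~\ref{def:symbisim} to $(C,P,Q) \in \mathcal{S}$ and this transition of $P$ gives a disjunction $\widehat C$ with $C \wedge C_P \cimplies \bigvee \widehat C$; since our common solution is in $\sol{C \wedge C_P}$, it lies in $\sol{C'}$ for some $C' \in \widehat C$, and for that $C'$ the definition supplies $Q'$, $\alpha' = \bout{\ve a}{y}{N'}$, $C_Q$ with $\transs{Q}{\alpha'}{C_Q}{Q'}$, $C' \cimplies C_Q$, $C' \cimplies \constr{N = N'}$, and $(C' \wedge \constr{\ve a \freshin P, Q}, P', Q') \in \mathcal{S}$. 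Then I run Lemma~\ref{lemma:correspondence-symbolic-concrete}(2) on the $Q$ side using the solution $(\sigma\cdot\lsubst M y, \Psi) \in \sol{C_Q}$ (inherited via $C' \cimplies C_Q$, after checking our solution is in $\sol{C'}$) to obtain the concrete $\Psi \frames \trans{Q\sigma}{(\bout{\ve a}{y}{N'})(\sigma\cdot\lsubst M y)}{Q'\sigma}$; since $y(\sigma\cdot\lsubst M y) = M$, the subject matches $\alpha$, and $C' \cimplies \constr{N=N'}$ together with our solution gives $N\sigma = N'\sigma$, so the objects match and we get $\Psi \frames \trans{Q\sigma}{\alpha}{Q'\sigma}$. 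Finally $(C' \wedge \constr{\ve a \freshin P,Q}, P', Q') \in \mathcal{S}$ together with $(\sigma\cdot\lsubst M y, \Psi)$ (which one checks solves $\constr{\ve a \freshin P,Q}$ because $\bn\alpha = \ve a$ is fresh for $Q\sigma$ and, by the transition's bound-name conventions, for $P\sigma$) witnesses $(\Psi, P'\sigma, Q'\sigma) \in \mathcal{R}$, as required. The silent case $\alpha = \tau$ is the same pattern using parts (3) of both correspondence lemmas, without the object-equality bookkeeping. The input case $\alpha = \inlabel{M}{\ve x}{x}$ is structurally identical on the transition side (via part (1) of both lemmas), but Definition~\ref{def:bisim} additionally demands a common derivative after substituting an arbitrary $L$ for $x$: here I use that $(C', P', Q') \in \mathcal{S}$ and that $(\sigma \cdot \lsubst L x, \Psi)$ solves $C'$ for every $L$ (since $x$ is fresh for $\sigma$ and for the constraints involved), so $(\Psi, P'(\sigma\cdot\lsubst L x), Q'(\sigma\cdot\lsubst L x)) = (\Psi, P'\sigma\lsubst L x, Q'\sigma\lsubst L x) \in \mathcal{R}$.

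The main obstacle is the careful freshness and substitution bookkeeping that glues the two correspondence lemmas to the symbolic bisimulation clause: one must keep the symbolic subject $y$ (and the input variable $x$, and the opened names $\ve a$, and the bound frame names $\frnames P$) simultaneously fresh for $\sigma$, for $P$, $Q$, $C$, and for each other, and argue that augmenting $\sigma$ with $\lsubst M y$ or $\lsubst L x$ preserves membership in $\sol{C}$ and in $\sol{C'}$ — which rests on the constraints in play not mentioning $y$ or $x$, a side condition that must be tracked through the construction. The alpha-conversion needed to satisfy the hypotheses $\frnames P, \ve b \freshin y, \Psi, \sigma, P$ of Lemma~\ref{lemma:correspondence-concrete-symbolic}, and the symmetric freshness in Lemma~\ref{lemma:correspondence-symbolic-concrete}, is routine but must be spelled out. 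Beyond that, the only conceptual subtlety is the use of Weakening for clause 3, already flagged in the text.
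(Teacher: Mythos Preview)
Your overall strategy is exactly the paper's: start from a concrete transition of $P\sigma$, lift it to a symbolic transition of $P$ via Lemma~\ref{lemma:correspondence-concrete-symbolic}, use the symbolic bisimulation to obtain a matching symbolic transition of $Q$, and push that down to a concrete transition of $Q\sigma$ via Lemma~\ref{lemma:correspondence-symbolic-concrete}. Your treatment of static equivalence, symmetry, assertion extension (via Weakening), and the output and $\tau$ cases is correct and matches the intended argument.

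There is, however, a genuine gap in the input case. You select $C' \in \widehat{C}$ (and hence $Q'$) using the solution $(\sigma\cdot\lsubst{M}{y},\Psi)$ \emph{before} the universally quantified $L$ is given, and then claim that the extended solution solves $C'$ for every $L$ ``since $x$ is fresh for \ldots\ the constraints involved''. But Definition~\ref{def:symbisim} explicitly allows $\bn{\alpha}$---here $x$---to occur in $\widehat{C}$ (see the remark immediately following it); this is precisely what makes the symbolic bisimulation \emph{early} rather than late. So $x$ need not be fresh for $C'$, and a single $Q'$ need not work for all $L$. The fix is to reverse the quantifier order: given $L$, observe that $(\sigma\cdot\lsubst{M}{y}\cdot\lsubst{L}{x},\Psi) \in \sol{C \wedge C_P}$ (this \emph{does} hold, since $x \freshin C, C_P$ by the side condition $\bn{\alpha} \freshin C, C_P$ in Definition~\ref{def:symbisim}), and use \emph{that} solution to pick some $C'_L \in \widehat{C}$ and the corresponding $Q'_L$. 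The concrete transition $\Psi \frames \trans{Q\sigma}{\inlabel{M}{}{x}}{Q'_L\sigma}$ still follows from Lemma~\ref{lemma:correspondence-symbolic-concrete}(1) because $x \freshin C_Q$ (the transition constraint $C_Q$ is built from $Q$ and $x \freshin Q$), so the $\lsubst{L}{x}$ component may be dropped there; but it is essential when witnessing $(\Psi, P'\sigma\lsubst{L}{x}, Q'_L\sigma\lsubst{L}{x}) \in \mathcal{R}$ via $(C'_L, P', Q'_L) \in \mathcal{S}$.
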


The proof idea to show that $\mathcal{R}$ is a concrete bisimulation is to
assume $(\Psi, P\sigma, Q\sigma) \in \mathcal{R}$ and that $P\sigma$ has a
transition in environment $\Psi$. We use
Lemma~\ref{lemma:correspondence-concrete-symbolic} to find a symbolic transition
from $P$, then the fact that $\mathcal{S}$ is a symbolic bisimulation to find a
simulating symbolic transition from $Q$, and finally
Lemma~\ref{lemma:correspondence-symbolic-concrete} to find the required concrete
transitions from $Q\sigma$. 

Similarly to~\cite{hennessy.lin:symbolic-bisimulations} we need an extra assumption about the expressiveness of constraints: for all ${\cal R}, P, Q$ such that
$\cal R$ is a concrete bisimulation there exists a constraint $C$ such that
\((\Psi, \sigma) \models C \Longleftrightarrow (\Psi, P\sigma, Q\sigma) \in {\cal R}\).
In order to determine symbolic bisimilarity in an efficient way we
need to compute this constraint, which is easy for the
pi-calculus~\cite{boreale.de-nicola:symbolic-semantics,lin:symbolic-transition,lin:computing-bisimulations}
and harder (but in many practical cases possible) for cryptographic
signatures~\cite{borgstroem:equivalences-calculi}. These results suggest that
our constraints are sufficiently expressive, but for other instances of
psi-calculi we may have to extend the constraint language. We leave this as an
area of further research.

\begin{theorem}[Completeness]
\label{theorem:completeness}
Assume that $\mathcal{R}$ is a concrete bisimulation and let \\ $\mathcal{S} =
\{(C,P,Q) : (\sigma,\Psi) \models C \text{ implies }
(\Psi, P\sigma, Q\sigma) \in \mathcal{R}\}$. Then $\mathcal{S}$ is a symbolic
bisimulation.
\end{theorem}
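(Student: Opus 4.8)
The plan is to verify that the relation $\mathcal{S} = \{(C,P,Q) : (\sigma,\Psi) \models C \text{ implies } (\Psi, P\sigma, Q\sigma) \in \mathcal{R}\}$ satisfies all three clauses of Definition~\ref{def:symbisim}, using as the central engine the operational correspondence lemmas (Lemmas~\ref{lemma:correspondence-symbolic-concrete} and~\ref{lemma:correspondence-concrete-symbolic}) together with the assumed expressiveness of the constraint language. Fix $(C,P,Q) \in \mathcal{S}$. The symmetry clause is immediate: since $\mathcal{R}$ is a bisimulation it is symmetric, so whenever $(\sigma,\Psi)\models C$ gives $(\Psi,P\sigma,Q\sigma)\in\mathcal{R}$ it also gives $(\Psi,Q\sigma,P\sigma)\in\mathcal{R}$, hence $(C,Q,P)\in\mathcal{S}$. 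For the symbolic static equivalence clause $P \sequivalent_C Q$: take any $(\sigma,\Psi)\in\sol{C}$; then $(\Psi,P\sigma,Q\sigma)\in\mathcal{R}$, and static equivalence of the concrete bisimulation gives $\Psi\ftimes\fr{P\sigma}\sequivalent\Psi\ftimes\fr{Q\sigma}$. Since substitution distributes over the frame constructors, $\fr{P\sigma}=\fr{P}\sigma$ (modulo the usual freshness side conditions on bound names, which we arrange by alpha-conversion), so $\Psi\ftimes\fr{P}\sigma\sequivalent\Psi\ftimes\fr{Q}\sigma$, which is exactly Definition~\ref{def:ssequivalent}.

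The substantive clause is simulation. Suppose $\transs{P}{\alpha}{C_P}{P'}$ with $\bn{\alpha}\freshin(P,Q,C,C_P,\subj{\alpha})$ and $\subj{\alpha}\freshin(P,Q,C)$. We must produce a set $\widehat{C}$ with $C\wedge C_P\cimplies\bigvee\widehat{C}$, each member of which is matched by a symbolic transition of $Q$ landing back in $\mathcal{S}$. The idea is to let the disjunction be indexed, essentially, by the concrete transitions that $Q\sigma$ can make: for each $(\sigma,\Psi)\in\sol{C\wedge C_P}$, Lemma~\ref{lemma:correspondence-symbolic-concrete} turns the symbolic transition $\transs{P}{\alpha}{C_P}{P'}$ into a concrete transition $\Psi\frames\trans{P\sigma}{\alpha\sigma}{P'\sigma}$ (with the received-variable, resp. bound-name, freshness for $\sigma$ handled by the side conditions on $\alpha$). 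Since $(\sigma,\Psi)\models C$ we have $(\Psi,P\sigma,Q\sigma)\in\mathcal{R}$, so the concrete bisimulation yields a matching $Q\sigma\xrightarrow{}$-transition — in the input case, for \emph{every} $L$, and with the derivatives related under the $L$-substitution; in the output/$\tau$ case, directly. Now apply Lemma~\ref{lemma:correspondence-concrete-symbolic} to lift that concrete transition of $Q\sigma$ back to a symbolic transition $\transs{Q}{\alpha'}{C_Q}{Q'}$ with $(\sigma\cdot\lsubst{M}{y},\Psi)\in\sol{C_Q}$ (output subject $M$ replaced by the fresh symbolic name $y$ from $\alpha$; in the $\tau$ case just $(\sigma,\Psi)\in\sol{C_Q}$). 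For each such lifted symbolic transition we want a constraint $C'$ that (i) is implied by $C\wedge C_P$ on exactly the solutions that gave rise to it, (ii) implies $C_Q$, and (iii) for bound output also implies $\constr{N=N'}$ and, after conjoining $\constr{\ve{a}\freshin P,Q}$, places $(P',Q')$ back in $\mathcal{S}$. The natural choice is $C' := C\wedge C_P\wedge C_Q$ (adding $\constr{N=N'}$ and $\constr{\ve{a}\freshin P,Q}$ in the bound-output case), possibly further refined by the expressiveness assumption; then (ii) is trivial, (iii) follows because every solution of $C'$ (together with the $L$-substitution in the input case) is a solution of the constraint defining membership in $\mathcal{R}$, and the collection of all these $C'$ over all $(\sigma,\Psi)\in\sol{C\wedge C_P}$ manifestly covers $C\wedge C_P$, giving (i), i.e. $C\wedge C_P\cimplies\bigvee\widehat{C}$.

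Two points need care and I expect the reconciliation of the input case to be the main obstacle. First, the \emph{input} clause of symbolic bisimulation does not carry the ``for all $L$'' quantifier explicitly — the received variable $x$ simply remains free in $P'$ and $Q'$, and matching under all $L$ is forced later because $(C',P',Q')\in\mathcal{S}$ must hold for \emph{all} solutions of $C'$, including those whose substitution component sends $x$ to an arbitrary $L$. So when I check $(C',P',Q')\in\mathcal{S}$ in the input case I must show: for every $(\rho,\Psi')\models C'$, $(\Psi',P'\rho,Q'\rho)\in\mathcal{R}$; I get this by instantiating the concrete input-simulation with $L := x\rho$ and observing $P'\rho = (P'\lsubst{x\rho}{x})\rho'$ where $\rho'$ agrees with $\rho$ off $x$ — i.e. decomposing $\rho$ through the received variable. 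Second, in general a single constraint $C'=C\wedge C_P\wedge C_Q$ might not suffice because different solutions of $C\wedge C_P$ can demand \emph{different} matching transitions of $Q$; this is precisely why $\widehat{C}$ is a set and why the expressiveness assumption on constraints is invoked — it guarantees, for each matching concrete transition of $Q\sigma$, a constraint pinning down exactly the solutions for which that transition is the witness, so that the union over the (up to the constraint language, finitely many) distinct matches does cover $C\wedge C_P$. Routine bookkeeping — alpha-converting $\bn{\alpha}$ and the frame names $\frnames{P},\frnames{Q}$ to be suitably fresh, checking the side conditions of the correspondence lemmas, and verifying $\fr{P\sigma}=\fr{P}\sigma$ — I will carry out inline; none of it is conceptually deep.
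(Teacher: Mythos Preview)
Your proposal is correct and follows essentially the same route the paper sketches: convert the symbolic $P$-transition to a concrete one via Lemma~\ref{lemma:correspondence-symbolic-concrete}, match it using $\mathcal{R}$, lift the resulting $Q\sigma$-transition back to a symbolic $Q$-transition via Lemma~\ref{lemma:correspondence-concrete-symbolic}, and use the expressiveness assumption to manufacture the members of $\widehat{C}$. One small sharpening: the paper's expressiveness assumption literally supplies, for each derivative pair $(P',Q')$, a constraint $D$ with $(\sigma,\Psi)\models D \iff (\Psi,P'\sigma,Q'\sigma)\in\mathcal{R}$, and it is conjoining this $D$ (not merely $C\wedge C_P\wedge C_Q$) into $C'$ that makes $(C',P',Q')\in\mathcal{S}$ hold---your phrase ``pinning down the solutions for which that transition is the witness'' should be read in exactly this sense.
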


The proof idea is the converse of the proof for Theorem~\ref{theorem:soundness}. The expressiveness assumption of constraints mentioned above is needed in order to construct the disjunction of constraints in the symbolic bisimulation. 
%
From these two theorems we get:
\begin{corollary}[Full abstraction]
$P \sim Q$ if and only if $P \sim_s Q$.
\end{corollary}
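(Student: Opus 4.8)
The plan is to derive the corollary directly from the two main theorems, Soundness (Theorem~\ref{theorem:soundness}) and Completeness (Theorem~\ref{theorem:completeness}), by chasing the definitions of $\sim$ and $\sim_s$ through the two constructions. Recall that $P \sim Q$ means $P\lsubst{L}{x} \bisim Q\lsubst{L}{x}$ for all $x,L$, where $\bisim$ requires a concrete bisimulation $\mathcal{R}$ with $\mathcal{R}(\emptyframe, \cdot, \cdot)$, and $P \sim_s Q$ means $(\true,P,Q) \in \mathcal{S}$ for some symbolic bisimulation $\mathcal{S}$. So the two directions require relating the ``closure under substitution'' in the concrete definition with the fact that $\true$ has, among its solutions, $(\mathrm{Id},\emptyframe)$ but also every pair $(\sigma,\emptyframe)$ with $\sigma$ acting only on names not relevant, and more to the point every pair of the form $(\lsubst{L}{x},\emptyframe)$.

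\paragraph{From $P \sim_s Q$ to $P \sim Q$.} Assume $(\true,P,Q) \in \mathcal{S}$ for a symbolic bisimulation $\mathcal{S}$. By Theorem~\ref{theorem:soundness}, $\mathcal{R} = \{(\Psi,P'\sigma,Q'\sigma) : \exists C.\,(\sigma,\Psi)\models C \wedge (C,P',Q')\in\mathcal{S}\}$ is a concrete bisimulation. Fix any $x,L$. Since $\true = \constr{M=M}$, its solutions are all pairs $(\sigma,\Psi)$, in particular $(\lsubst{L}{x},\emptyframe)$. Hence taking $C = \true$, $P' = P$, $Q' = Q$, $\sigma = \lsubst{L}{x}$, $\Psi = \emptyframe$, we get $(\emptyframe, P\lsubst{L}{x}, Q\lsubst{L}{x}) \in \mathcal{R}$, so $P\lsubst{L}{x} \bisim Q\lsubst{L}{x}$. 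As $x,L$ were arbitrary, $P \sim Q$.

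\paragraph{From $P \sim Q$ to $P \sim_s Q$.} Assume $P \sim Q$, i.e. $P\lsubst{L}{x}\bisim Q\lsubst{L}{x}$ for all $x,L$. The natural move is to collect witnessing bisimulations into a single relation. Concretely, one shows there is a concrete bisimulation $\mathcal{R}$ with $(\emptyframe, P\sigma, Q\sigma) \in \mathcal{R}$ for \emph{every} substitution $\sigma$ (not just single-name ones): a substitution $\lsubst{\tilde L}{\tilde a}$ can be obtained by iterating single-name substitutions, and since $\bisim$ is closed under the static-equivalence and simulation clauses one assembles the union of the witnessing relations (up to the standard closure making it a bisimulation). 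Alternatively, one appeals to the fact that $\bisim$ is a congruence (the corollary after Lemma following Definition~\ref{def:bisim}) to absorb substitutions. Having $\mathcal{R}$, Theorem~\ref{theorem:completeness} gives that $\mathcal{S} = \{(C,P',Q') : (\sigma,\Psi)\models C \Rightarrow (\Psi,P'\sigma,Q'\sigma)\in\mathcal{R}\}$ is a symbolic bisimulation. It remains to check $(\true,P,Q) \in \mathcal{S}$: for any $(\sigma,\Psi)\models\true$ we need $(\Psi,P\sigma,Q\sigma)\in\mathcal{R}$; but $\true$ holds only constrains the substitution, and the assertion part $\Psi$ is arbitrary --- here one uses that $\mathcal{R}$ built from $\bisim$ is, by the ``extension of arbitrary assertion'' clause of Definition~\ref{def:bisim}, closed under replacing $\emptyframe$ by any $\Psi$, so from $(\emptyframe,P\sigma,Q\sigma)\in\mathcal{R}$ we get $(\Psi,P\sigma,Q\sigma)\in\mathcal{R}$. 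Hence $(\true,P,Q)\in\mathcal{S}$ and $P \sim_s Q$.

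\paragraph{Main obstacle.} The routine direction is $\sim_s \Rightarrow \sim$, which is immediate from the shape of $\true$. The delicate point is the other direction: one must be careful that $P\sim Q$, which only promises simulations after arbitrary \emph{single}-name substitutions, yields a single concrete bisimulation closed under \emph{all} substitutions and under arbitrary assertion extension, so that $(\true,P,Q)$ lands in the $\mathcal{S}$ of Theorem~\ref{theorem:completeness}. This is exactly where the congruence property of $\sim$ inherited from~\cite{bengtson.johansson.ea:psi-calculi} (noted just after Definition~\ref{def:bisim}) does the work, together with the ``extension of arbitrary assertion'' clause of bisimulation; no new combinatorics beyond bookkeeping of the closure operations is needed.
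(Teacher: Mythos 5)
Your proposal is correct and follows essentially the same route as the paper, which derives the corollary directly from Theorems~\ref{theorem:soundness} and~\ref{theorem:completeness} without further argument; you merely spell out the definitional bookkeeping (solutions of $\true$, closure of $\bisim$ under substitutions and assertion extension) that the paper leaves implicit. The only point to note is that the ``extension of arbitrary assertion'' clause yields triples with assertion $\emptyframe\ftimes\Psi$ rather than $\Psi$ itself, so one silently uses that bisimulations are insensitive to replacing the environment assertion by a statically equivalent one --- a routine step that does not affect the argument.
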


\section{Conclusion and Future Work}

\label{sec:conclusion}
We have defined a symbolic operational semantics for psi-calculi and a
symbolic bisimulation which is fully abstract wrt the original
semantics. 
While the developments
in~\cite{bengtson.johansson.ea:psi-calculi} give meta-theory for a
wide range of calculi of mobile processes with nominal data and
logic, the work presented in this paper gives a solid foundation for
automated tools for the analysis of such calculi.  

As mentioned in the introduction, the purity of the original semantics
of psi-calculi has made the symbolic semantics easier to develop.
There are no structural equivalence rules (which are a complication in
\api), the scope opening rule is because of this straight-forward
which makes knowledge representation simpler than in spi-calculi, and
the bisimulation less complex.
Nevertheless, the technical challenges have not been absent: the
precise design of the constraints and their solution has been
delicate.  Since assertions may occur under a prefix, the environment can
change after a transition. Keeping the assertion $\Psi$ in the
transition constraints
(on the form $(\nu \ve{a})\constr{\Psi \vdash \varphi}$) essentially keeps
a snapshot of the environment that gives rise to the transition. An
alternative would be to use time stamps to keep track of which environment
made which condition true, but that approach seems more difficult.

Our symbolic bisimulation is a strong equivalence which does not
abstract the internal $\tau$ transitions. This is less useful for
verification than a weak observational equivalence, but still a
significant step towards mechanized verification.
We are currently developing a weak bisimulation for psi-calculi, and
are studying the correspondence to a barbed bisimulation congruence.
Preliminary results indicate that lifting the symbolic bisimulation
presented here to weak bisimulation will be unproblematic.

The original psi-calculi admit pattern matching in inputs. 
In a symbolic semantics this would lead to complications in the \textsc{COM}-rule,
which should introduce a substitution
for the names bound in the pattern. This means introducing
more fresh names and constraints, and it is not clear that the
convenience of pattern matching outweighs such an awkward semantic
rule. We leave this as an area for further study.

For future work, we need to develop an algorithm for deciding symbolic
bisimulation and implement it in a tool. A natural basis for this
would be the algorithm given in
\cite{hennessy.lin:symbolic-bisimulations}.
Furthermore, the termination of the algorithm will depend
on the properties of the parameters of the particular psi-calculus: it
is easy to construct a psi-calculus where the entailment relation or
static equivalence is
not decidable, but in many practical cases it will be~\cite{borgstroem:equivalences-calculi,baudet:thesis}.  
We intend to use constraint solvers developed for specific application
domains (e.g. security) in a future generic tool. We will also produce mechanized proofs of the adequacy of the symbolic
semantics, using the Isabelle theorem prover.

When typing schemes have been developed for psi-calculi,
a natural progression would be to take advantage of those also in the
symbolic semantics, to further constrain the possible values and thus the
size of state spaces.

\bibliographystyle{eptcs}
\bibliography{sos09-symbolic-psi,pi}

\newcommand{\noopsort}[1]{}
\begin{thebibliography}{10}
\providecommand{\bibitemstart}[1]{\bibitem{#1}}
\providecommand{\bibitemend}{}
\providecommand{\bibliographystart}{}
\providecommand{\bibliographyend}{}
\providecommand{\url}[1]{\texttt{#1}}
\providecommand{\urlprefix}{Available at }
\providecommand{\bibinfo}[2]{#2}
\bibliographystart

\bibitemstart{abadi.blanchet:analyzing-security}
\bibinfo{author}{Mart{\'\i}n Abadi} \& \bibinfo{author}{Bruno Blanchet}
  (\bibinfo{year}{2005}): \emph{\bibinfo{title}{Analyzing Security Protocols
  with Secrecy Types and Logic Programs}}.
\newblock {\sl \bibinfo{journal}{Journal of the ACM}}
  \bibinfo{volume}{52}(\bibinfo{number}{1}), pp. \bibinfo{pages}{102--146}.
\bibitemend

\bibitemstart{abadi.fournet:mobile-values}
\bibinfo{author}{Mart{\'\i}n Abadi} \& \bibinfo{author}{C{\'e}dric Fournet}
  (\bibinfo{year}{2001}): \emph{\bibinfo{title}{Mobile Values, New Names, and
  Secure Communication}}.
\newblock In: {\sl \bibinfo{booktitle}{Proceedings of POPL '01}}.
  \bibinfo{organization}{ACM}, pp. \bibinfo{pages}{104--115}.
\bibitemend

\bibitemstart{abadi.gordon:calculus-cryptographic}
\bibinfo{author}{Mart{\'\i}n Abadi} \& \bibinfo{author}{Andrew~D. Gordon}
  (\bibinfo{year}{1999}): \emph{\bibinfo{title}{A Calculus for Cryptographic
  Protocols: The {Spi} Calculus}}.
\newblock {\sl \bibinfo{journal}{Journal of Information and Computation}}
  \bibinfo{volume}{148}(\bibinfo{number}{1}), pp. \bibinfo{pages}{1--70}.
\bibitemend

\bibitemstart{baudet:thesis}
\bibinfo{author}{Mathieu Baudet} (\bibinfo{year}{2007}):
  \emph{\bibinfo{title}{S{\'e}curit{\'e} des protocoles cryptographiques:
  aspects logiques et calculatoires}}.
\newblock \bibinfo{type}{Ph.D. thesis}, \bibinfo{school}{{\'E}cole Normale
  Sup{\'e}rieure de Cachan}.
\bibitemend

\bibitemstart{bengtson.johansson.ea:psi-calculi}
\bibinfo{author}{Jesper Bengtson}, \bibinfo{author}{Magnus Johansson},
  \bibinfo{author}{Joachim Parrow} \& \bibinfo{author}{Bj{\"o}rn Victor}
  (\bibinfo{year}{2009}): \emph{\bibinfo{title}{Psi-calculi: Mobile processes,
  nominal data, and logic}}.
\newblock In: {\sl \bibinfo{booktitle}{Proceedings of LICS 2009}}.
  \bibinfo{organization}{IEEE}, \bibinfo{publisher}{Computer Society Press},
  pp. \bibinfo{pages}{39--48}.
\bibitemend

\bibitemstart{bengtson.parrow:psi-calculi-isabelle}
\bibinfo{author}{Jesper Bengtson} \& \bibinfo{author}{Joachim Parrow}
  (\bibinfo{year}{2009}): \emph{\bibinfo{title}{Psi-calculi in {I}sabelle}}.
\newblock In: \bibinfo{editor}{Stefan Berghofer}, \bibinfo{editor}{Tobias
  Nipkow}, \bibinfo{editor}{Christian Urban} \& \bibinfo{editor}{Makarius
  Wenzel}, editors: {\sl \bibinfo{booktitle}{Proc. of TPHOLs 2009}}, {\sl
  \bibinfo{series}{LNCS}} \bibinfo{volume}{5674}.
  \bibinfo{publisher}{Springer}, pp. \bibinfo{pages}{99--114}.
\bibitemend

\bibitemstart{blanchet:efficient-cryptographic}
\bibinfo{author}{Bruno Blanchet} (\bibinfo{year}{2001}):
  \emph{\bibinfo{title}{An Efficient Cryptographic Protocol Verifier Based on
  Prolog Rules}}.
\newblock {\sl \bibinfo{journal}{CSFW'01: Proceedings of the 14th IEEE Computer
  Security Foundations Workshop}} , p.~\bibinfo{pages}{82}.
\bibitemend

\bibitemstart{blanchet.abadi.ea:automated-verification}
\bibinfo{author}{Bruno Blanchet}, \bibinfo{author}{Mart{\'\i}n Abadi} \&
  \bibinfo{author}{C{\'e}dric Fournet} (\bibinfo{year}{2005}):
  \emph{\bibinfo{title}{Automated Verification of Selected Equivalences for
  Security Protocols}}.
\newblock In: {\sl \bibinfo{booktitle}{20th IEEE Symposium on Logic in Computer
  Science (LICS 2005)}}. \bibinfo{organization}{IEEE},
  \bibinfo{publisher}{Computer Society Press}, pp. \bibinfo{pages}{331--340}.
\bibitemend

\bibitemstart{boreale.de-nicola:symbolic-semantics}
\bibinfo{author}{Michele Boreale} \& \bibinfo{author}{Rocco {De Nicola}}
  (\bibinfo{year}{1996}): \emph{\bibinfo{title}{A Symbolic Semantics for the
  $\pi$-Calculus}}.
\newblock {\sl \bibinfo{journal}{Journal of Information and Computation}}
  \bibinfo{volume}{126}(\bibinfo{number}{1}), pp. \bibinfo{pages}{34--52}.
\newblock \bibinfo{note}{Available as Report SI~94~RR~04, Universit{\`a} ``La
  Sapienza'' di Roma; an extended abstract appeared in {\em Proceedings of
  CONCUR '94}, pages 299--314, LNCS 836}.
\bibitemend

\bibitemstart{borgstroem:equivalences-calculi}
\bibinfo{author}{Johannes Borgstr{\"o}m} (\bibinfo{year}{2008}):
  \emph{\bibinfo{title}{Equivalences and Calculi for Formal Verifiation of
  Cryptographic Protocols}}.
\newblock \bibinfo{type}{Ph.D. thesis}, \bibinfo{school}{EPFL, Lausanne}.
\bibitemend

\bibitemstart{borgstroem.briais.ea:symbolic-bisimulation}
\bibinfo{author}{Johannes Borgstr{\"o}m}, \bibinfo{author}{S{\'e}bastien
  Briais} \& \bibinfo{author}{Uwe Nestmann} (\bibinfo{year}{2004}):
  \emph{\bibinfo{title}{Symbolic Bisimulation in the Spi Calculus}}.
\newblock In: {\sl \bibinfo{booktitle}{Proceedings of CONCUR 2004}}, {\sl
  \bibinfo{series}{LNCS}} \bibinfo{volume}{3170}.
  \bibinfo{publisher}{Springer}, pp. \bibinfo{pages}{161--176}.
\bibitemend

\bibitemstart{borgstrom.nestmann:bisimulations-spi}
\bibinfo{author}{Johannes Borgstr{\"o}m} \& \bibinfo{author}{Uwe Nestmann}
  (\bibinfo{year}{2002}): \emph{\bibinfo{title}{On Bisimulations for the Spi
  Calculus}}.
\newblock In: \bibinfo{editor}{H{\'e}l{\`e}ne Kirchner} \&
  \bibinfo{editor}{Christophe Ringeissen}, editors: {\sl
  \bibinfo{booktitle}{Proceedings of AMAST 2002}}, {\sl \bibinfo{series}{LNCS}}
  \bibinfo{volume}{2422}. \bibinfo{publisher}{Springer}, pp.
  \bibinfo{pages}{287--303}.
\bibitemend

\bibitemstart{briais:theory-tool}
\bibinfo{author}{S{\'e}bastien Briais} (\bibinfo{year}{2008}):
  \emph{\bibinfo{title}{Theory and Tool Support for the Formal Verification of
  Cryptographic Protocols}}.
\newblock \bibinfo{type}{Ph.D. thesis}, \bibinfo{school}{EPFL, Lausanne}.
\bibitemend

\bibitemstart{buscemi.montanari:open-bisimulation}
\bibinfo{author}{Maria~Grazia Buscemi} \& \bibinfo{author}{Ugo Montanari}
  (\bibinfo{year}{2008}): \emph{\bibinfo{title}{Open Bisimulation for the
  Concurrent Constraint Pi-calculus}}.
\newblock In: \bibinfo{editor}{Sophia Drossopoulou}, editor: {\sl
  \bibinfo{booktitle}{Proceedings of ESOP 2008}}, {\sl \bibinfo{series}{LNCS}}
  \bibinfo{volume}{4960}. \bibinfo{publisher}{Springer}, pp.
  \bibinfo{pages}{254--268}.
\bibitemend

\bibitemstart{DBLP:conf/fsttcs/DelauneKR07}
\bibinfo{author}{St{\'e}phanie Delaune}, \bibinfo{author}{Steve Kremer} \&
  \bibinfo{author}{Mark Ryan} (\bibinfo{year}{2007}):
  \emph{\bibinfo{title}{Symbolic Bisimulation for the Applied Pi Calculus}}.
\newblock In: \bibinfo{editor}{Vikraman Arvind} \& \bibinfo{editor}{Sanjiva
  Prasad}, editors: {\sl \bibinfo{booktitle}{Proc. of FSTTCS 2007}}, {\sl
  \bibinfo{series}{Lecture Notes in Computer Science}} \bibinfo{volume}{4855}.
  \bibinfo{publisher}{Springer}, pp. \bibinfo{pages}{133--145}.
\newblock \urlprefix\url{http://dx.doi.org/10.1007/978-3-540-77050-3_11}.
\bibitemend

\bibitemstart{Gabbay01anew}
\bibinfo{author}{Murdoch Gabbay} \& \bibinfo{author}{Andrew Pitts}
  (\bibinfo{year}{2001}): \emph{\bibinfo{title}{A New Approach to Abstract
  Syntax with Variable Binding}}.
\newblock {\sl \bibinfo{journal}{Formal Aspects of Computing}}
  \bibinfo{volume}{13}, pp. \bibinfo{pages}{341--363}.
\bibitemend

\bibitemstart{hennessy.lin:symbolic-bisimulations}
\bibinfo{author}{Matthew Hennessy} \& \bibinfo{author}{Huimin Lin}
  (\bibinfo{year}{1995}): \emph{\bibinfo{title}{Symbolic Bisimulations}}.
\newblock {\sl \bibinfo{journal}{Theoretical Computer Science}}
  \bibinfo{volume}{138}(\bibinfo{number}{2}), pp. \bibinfo{pages}{353--389}.
\newblock \bibinfo{note}{Earlier version published as Technical Report 1/92,
  School of Cognitive and Computing Sciences, University of Sussex, UK}.
\bibitemend

\bibitemstart{lin:symbolic-transition}
\bibinfo{author}{Huimin Lin} (\bibinfo{year}{1996}):
  \emph{\bibinfo{title}{Symbolic Transition Graph with Assignment}}.
\newblock In: \bibinfo{editor}{Ugo Montanari} \& \bibinfo{editor}{Vladimiro
  Sassone}, editors: {\sl \bibinfo{booktitle}{Proceedings of CONCUR '96}}, {\sl
  \bibinfo{series}{LNCS}} \bibinfo{volume}{1119}.
  \bibinfo{publisher}{Springer}, pp. \bibinfo{pages}{50--65}.
\bibitemend

\bibitemstart{lin:computing-bisimulations}
\bibinfo{author}{Huimin Lin} (\bibinfo{year}{2000}):
  \emph{\bibinfo{title}{Computing Bisimulations for Finite-Control
  pi-Calculus}}.
\newblock {\sl \bibinfo{journal}{Journal of Computer Science and Technology}}
  \bibinfo{volume}{15}(\bibinfo{number}{1}), pp. \bibinfo{pages}{1--9}.
\bibitemend

\bibitemstart{PittsAM:nomlfo-jv}
\bibinfo{author}{A.~M. Pitts} (\bibinfo{year}{2003}):
  \emph{\bibinfo{title}{Nominal Logic, A First Order Theory of Names and
  Binding}}.
\newblock {\sl \bibinfo{journal}{Information and Computation}}
  \bibinfo{volume}{186}, pp. \bibinfo{pages}{165--193}.
\bibitemend

\bibitemstart{sangiorgi:theory-bisimulation}
\bibinfo{author}{Davide Sangiorgi} (\bibinfo{year}{1996}):
  \emph{\bibinfo{title}{A Theory of Bisimulation for the $\pi$-calculus}}.
\newblock {\sl \bibinfo{journal}{Acta Informatica}} \bibinfo{volume}{33}, pp.
  \bibinfo{pages}{69--97}.
\newblock \bibinfo{note}{Earlier version published as Report ECS-LFCS-93-270,
  University of Edinburgh. An extended abstract appeared in the {\em
  Proceedings of CONCUR '93}, LNCS 715}.
\bibitemend

\bibitemstart{victor.moller:mobility-workbench}
\bibinfo{author}{Bj{\"o}rn Victor} \& \bibinfo{author}{Faron Moller}
  (\bibinfo{year}{{\noopsort{b}}1994}): \emph{\bibinfo{title}{The {M}obility
  {W}orkbench --- A Tool for the $\pi$-Calculus}}.
\newblock In: \bibinfo{editor}{David Dill}, editor: {\sl
  \bibinfo{booktitle}{Proceedings of CAV '94}}, {\sl \bibinfo{series}{LNCS}}
  \bibinfo{volume}{818}. \bibinfo{publisher}{Springer}, pp.
  \bibinfo{pages}{428--440}.
\bibitemend

\bibliographyend
\end{thebibliography}

\end{document}